\begin{document}
%
\title{Relay Beamforming Design with SIC Detection for MIMO Multi-Relay Networks with Imperfect CSI}

\author{Zijian Wang, and Wen Chen,~\IEEEmembership{Senior Member,~IEEE}
\thanks{Copyright (c) 2013 IEEE. Personal use of this material is permitted. However, permission to use this material for any other purposes must be obtained from the IEEE by sending a request to pubs-permissions@ieee.org. }
\thanks{The authors are with Department of Electronic Engineering, Shanghai
        Jiao Tong University, China. e-mail: \{wangzijian1786; wenchen\}@sjtu.edu.cn.}
\thanks{This work is supported by the National 973 Project \#2012CB316106 and
\#2009CB824904, by NSF China \#60972031 and \#61161130529.}
}

\maketitle

\begin{abstract}
In this paper, we consider a dual-hop Multiple Input Multiple Output
(MIMO) wireless multi-relay network, in which a source-destination pair
both equipped with multiple antennas communicates through multiple
half-duplex amplify-and-forward (AF)  relay terminals which are also with multiple antennas. Since perfect channel state information (CSI) is difficult to obtain in practical multi-relay network, we consider imperfect CSI for all channels. We focus on maximizing the signal-to-interference-plus-noise ratio (SINR) at the destination. We
propose a novel robust linear beamforming at the relays, based on the QR decomposition filter at the destination node which
performs successive interference cancellation (SIC). Using Law of
Large Number, we obtain the asymptotic rate in
the presence of imperfect CSI, upon which, the proposed relay beamforming is optimized.
 Simulation results
show that the asymptotic rate matches with the  ergodic rate
 well. Analysis and simulation results demonstrate that the
proposed beamforming outperforms the conventional beamforming schemes for any power
of CSI errors and SNR regions.
\end{abstract}

\begin{keywords}
MIMO relay, successive interference cancellation (SIC) detection, relay beamforming, channel state information (CSI), rate.
\end{keywords}

\section{Introduction}
Relay communications can extend the coverage of wireless networks
and improve spatial diversity of cooperative systems~\cite{rev1}.
 Meanwhile, MIMO
technique is well verified to provide significant improvement in the
spectral efficiency and link reliability because of the multiplexing
and diversity gains~\cite{1,2}. Combining the relaying and MIMO
techniques can make use of  both advantages to increase the data
rate in the cellular edge and extend the network coverage~\cite{rev2}.

MIMO relay networks and MIMO broadcasting relay networks have been extensively investigated in~\cite{3,5,7,31,32,33,rev3}. In addition MIMO multi-relay networks have been studied  in~\cite{4,6,8,9,SVD,iterative,10}. In~\cite{4}, the authors show that the
corresponding  network capacity scales as $C=(M/2) \log(K)+O(1)$,
where $M$ is the number of antennas at the source and
$K\rightarrow\infty$ is the number of relays. The authors also propose a
simple protocol to achieve the upper bound as $K\rightarrow\infty$ when perfect channel state informations (CSIs) of both backward channels (BC) and forward channels (FC) are available at the relay nodes. When CSIs are not available at the relays, a simple AF beamforming protocol is proposed at the relays, but the distributed array gain is not obtained.
In~\cite{8,9}, the authors design three relay beamforming schemes
based on matrix triangularization which have superiority over the
conventional zero-forcing (ZF) and amplify-and-forward (AF)
beamformers. The proposed beamforming scheme can both  fulfill
intranode gain and distributed array gain. In \cite{SVD}, the
authors design a beamforming scheme that achieves the upper bound of
capacity with a small gap when $K$ is significantly large. But it
has bad performance for small $K$ and the source needs CSI which
increases overhead. A unified algorithm is proposed in \cite{iterative} for the optimal linear transceivers at the source
and relays for both one-way and two-way networks. In~\cite{10}, two efficient relay-beamformers for the dual-hop MIMO
multi-relay networks have been presented, which are
based on matched filter (MF) and regularized zero-forcing (RZF), and
utilize QR decomposition (QRD) of the effective system channel
matrix at the destination node~\cite{13}. The beamformers at the
relay nodes can exploit the distributed array gain by diagonalizing
both the backward and forward
channels. The QRD can exploit the intranode array gain by successive interference cancellation (SIC)
detection. These two beamforming schemes  have lower complexity because they
only need one QR decomposition at destination.

On the other hand, all the works for multi-relay MIMO system only
consider  perfect CSI to design
beamformers at the relays or successive interference cancelation
 matrices at the destination. For the multi-relay networks,
imperfect CSI is a practical consideration~\cite{4}. Especially,
knowledge for the CSI of FC at relays will result in large delay and
significant training overhead, because the CSI of FCs
at relays are obtained through feedback links to multiple
relays~\cite{40}. The imperfect CSI of BC at relays is also
practical because of channel estimation error.

For the works on imperfect CSI,
the ergodic capacity and BER performance of MIMO with imperfect CSI is considered in~\cite{14,16,19}.
In~\cite{14}, the authors investigated lower and upper bounds
of mutual information under CSI error. In~\cite{16}, the authors
studied BER performance of MIMO system under combined beamforming
and maximal ratio combining (MRC) with imperfect CSI. In~\cite{19},
bit error probability (BEP) is analyzed based on Taylor
approximation.
Some optimization problem has been investigated with
imperfect CSI in~\cite{15,20,21}. In~\cite{15}, the authors
maximize a lower bound of capacity by optimally configuring the
number of antennas with imperfect CSI.
In~\cite{21}, the authors studied
the trade-off between accuracy of channel estimation and data
transmission, and show that the optimal number of training symbols
is equal to the number of transmit antennas. In~\cite{34}, the authors investigate the effects of channel estimation error on the receiver of MIMO AF two-way relaying networks.

In this
paper, we propose a new robust beamforming schemes for dual-hop MIMO
multi-relay networks under the condition of imperfect CSI. SIC is
also implemented at the destination by QR decomposition. The
proposed beamformer at relay is based on the minimum mean square error (MMSE) receiver and the RZF
precoder. We focus on optimizing the regularizing factors in them.
We first optimize the factor $\alpha^{\mathrm{MMSE}}$ in MMSE, and
then optimize the factor $\alpha^{\mathrm{RZF}}$ in RZF for a given
$\alpha^{\mathrm{MMSE}}$. In the derivation, using Law of
Large Number, we obtain the asymtotic rate capacity for the
MMSE-RZF beamformer, based on which,  the performance of the
beamformer for imperfect CSI can be easily analyzed. Simulation
results show that the asymptotic rate capacity matches with the
ergodic capacity well. The asymptotic rate also validates the scaling law in~\cite{4},
when the imperfect CSI presents. Analysis and simulations demonstrate
that the rate of MMSE-RZF outperforms other schemes whether CSI
is perfect or not.
The ceiling effect of the rate capacity and the situation that
CSI error increases with the number of relays are also discussed in
this paper.

The remainder of this paper is organized as follows.  In Section
\uppercase \expandafter {\romannumeral 2}, the system model of a
dual-hop MIMO multi-relay network is introduced. In Section \uppercase
\expandafter {\romannumeral 3}, we explain the MMSE-RZF based beamforming
scheme and QR decomposition. In Section \uppercase \expandafter
{\romannumeral 4}, we optimize the MMSE-RZF and obtain the asymptotic rate of the system. Section \uppercase \expandafter
{\romannumeral 5} devotes to simulation results followed by
conclusion in Section \uppercase \expandafter {\romannumeral 6}.


In this paper, boldface lowercase letter and boldface uppercase
letter represent vectors and matrices, respectively. Notations $\left( {\bf{A}} \right)_{i}$ and  $\left( {\bf{A}} \right)_{i,j}$
denote the $i$-th row and $(i,j)$-th entry of the matrix
${\bf{A}}$. Notations $\mathrm{tr}(\cdot)$, $(\cdot)^{\dag}$, $(\cdot)^{*}$ and $(\cdot)^H$ denote
trace, pseudo-inverse, conjugate and conjugate transpose operation of a matrix respectively. Term
$\mbox{\boldmath $\mathbf{I}$}_N$ is an $N{\times}N$ identity
matrix. The $\mathrm{diag}\left\{\{a_m\}_{m=1}^M\right\}$ denotes a diagonal matrix with diagonal entries of $a_1,\ldots,a_M$. $\|\mathbf{a}\|$ stands for the Euclidean norm of a vector $\mathbf{a}$, and  $\overset{w.p.}{\longrightarrow}$ represents convergence with probability one. Finally, we denote the expectation operation by $\mathrm{E}\left\{\cdot\right\}$.

\section{System Model}
The considered MIMO multi-relay network consists of a single source
and destination node both equipped with $M$ antennas, and $K$
$N$-antenna relay nodes distributed between the source-destination
pair as illustrated in Fig.~1. When the source node implements
spatial multiplexing, the requirement $N \ge M$ must be satisfied if
each relay node is supposed to support all the $M$ independent data
streams. We assume $M=N$ in this paper, while the proposed beamforming scheme and the results can be easily expanded to the case $N > M$. We consider half-duplex non-regenerative relaying
throughout this paper, where it takes two non-overlapping time slots
for the data to be transmitted from the source to the destination
node via the backward channels and forward channels. Due to deep
large-scale fading effects produced by the long distance, we assume
that there is no direct link between the source and destination.  In
a practical system, each relay needs to transmit training sequences
or pilots to acquire the CSI of all channels. Imperfect
channel estimation and limited feedback are also practical
considerations. So in this paper, imperfect CSIs of BC and FC are
assumed to be available at relay nodes. Assume that $\widehat{\bf{H}}_k\in \mathbb{C}^{M\times
M}$ and $\widehat{\bf{G}}_k\in \mathbb{C}^{M\times M}$ stands for
the available imperfect CSIs of BC and FC at the $k$-th relay. We
model the CSIs of BC and FC of the $k$-th relay as
\begin{eqnarray}
{\bf{H}}_k&=&\widehat{\bf{H}}_k+e_1{\bf{\Omega}}_{1,k},\label{revision1}\\
{\bf{G}}_k&=&\widehat{\bf{G}}_k+e_2{\bf{\Omega}}_{2,k},\label{revision2}
\end{eqnarray}
where ${\bf{H}}_k  \in\mathbb{C}^{M \times M}$ and ${\bf{G}}_k
\in\mathbb{C}^{M \times M}$ $(k=1,...,K)$ stand  for the backward
and forward MIMO channel matrix of the $k$-th relay node
respectively. ${\bf{\Omega}}_{1,k}$ and ${\bf{\Omega}}_{2,k}$ are
matrices respectively independent of ${\bf{H}}_k$ and ${\bf{G}}_k$,
whose entries are $i.i.d$ zero-mean complex Gaussian, with unity
variance~\cite{14,27}. Therefore the power of CSI errors of BC and FC
are $e_1^{2}$ and $e_2^{2}$. Since $e_1^2$ is the power of channel estimation error, it can be
made very small. $e_2^2$ is the power of the channel error majorally coming from channel quantization,
which is bounded by $2^{-B/M}$ if $B$ bits is used to do quantization. In this paper, we therefore assume $e_1^2\ll 1$ and $e_2^2<1$,
which are reasonable assumptions in a practical system. In this paper, all the relay nodes are supposed to
be located in a cluster. Then all the channels
${\bf{H}}_1,\cdots,{\bf{H}}_K$ and ${\bf{G}}_1,\cdots,{\bf{G}}_K$
can be supposed to be independently and identically distributed
($i.i.d$) and experience the same Rayleigh flat fading. Assume that
the entries of ${\bf{H}}_k$ and ${\bf{G}}_k$ are zero-mean complex
Gaussian random variables with variance one.
\begin{figure}
\centering
\includegraphics[width=3in]{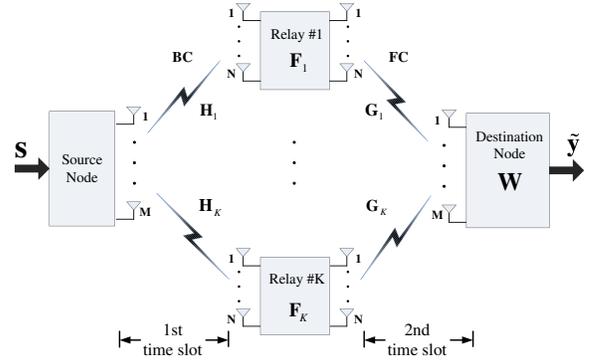}
\caption{System model of the dual-hop MIMO multi-relay network with relay beamforming and successive interference cancellation (SIC) at the destination.}
\end{figure}
In the first time slot, the source node broadcasts the signal to all
the relay nodes through BCs. Let $M{\times}1$ vector $\mbox{\boldmath
$\mathbf{s}$}$ be the transmit signal vector satisfying the power
constraint ${\rm E}\left\{ {{\bf{ss}}^H } \right\} = \left( {{P
\mathord{\left/
 {\vphantom {P M}} \right.
 \kern-\nulldelimiterspace} M}} \right){\bf{I}}_M$, where $P$ is defined
as the transmit power at the source node. Then the corresponding received signal at the $k$-th
relay can be written as
\begin{equation}
{\bf{r}}_k  = {\bf{H}}_k {\bf{s}} + {\bf{n}}_k,
\end{equation}
where the term $\mbox{\boldmath$\mathbf{n}$}_{k}$ is the
spatio-temporally white zero-mean complex additive Gaussian noise
vector, independent across $k$, with the covariance matrix ${\rm
E}\left\{ {{\bf{n}}_k {\bf{n}}_k^H } \right\} = \sigma _1^2
{\bf{I}}_M$. Therefore, noise variance $\sigma _1^2$ represents the
noise power at each relay node.

In the second time slot, firstly each relay node performs linear
processing by multiplying ${\bf{r}}_k$ with an $N \times N$ beamforming
matrix ${\bf{F}}_k$. This ${\bf{F}}_k$ is based on its imperfect CSIs ${\widehat{\bf{H}}}_k$ and ${\widehat{\bf{G}}}_k$.
Consequently, the signal vector sent from the $k$-th relay node is
\begin{equation}
{\bf{t}}_k  = {\bf{F}}_k {\bf{r}}_k.
\end{equation}
From more practical consideration, we assume that each relay node
has its own power constraint satisfying ${\rm E}\left\{
{{\bf{t}}_k^H {\bf{t}}_k } \right\} \leq Q$, which is independent of
power $P$. Hence a power constraint condition of ${\bf{t}}_k$ can be
derived as
\begin{equation}\label{9}
\rho\left( {{\bf{t}}_k } \right) = tr\left\{ {{\bf{F}}_k \left(
{\frac{P}{M}{\bf{H}}_k {\bf{H}}_k^H  + \sigma _1^2 {\bf{I}}_N }
\right){\bf{F}}_k^H } \right\} \le Q.
\end{equation}
After linear relay beamforming processing, all the relay nodes forward their data simultaneously to the destination.
Thus the signal vector received by the destination can be expressed as
\begin{multline}\label{rzf5}
 {\bf{y}}  = \sum_{k=1}^{K}{\bf{G}}_k{\bf{t}}_k
+ {\bf{n}}_d  \\=  \sum_{k=1}^{K} {\bf{G}}_k {\bf{F}}_k  {\bf{H}}_k
{\bf{s}} +\sum_{k=1}^{K} {\bf{G}}_k {\bf{F}}_k {\bf{n}}_k
+{\bf{n}}_d,
\end{multline}
where  ${\bf{n}}_d \in \mathbb{C}^M$, satisfying ${\rm
E}\left\{ {{\bf{n}}_d {\bf{n}}_d^H } \right\} = \sigma _2^2
{\bf{I}}_M$, denotes the zero-mean white circularly symmetric
complex additive Gaussian noise vector at the destination node with the
noise power $\sigma _2^2$.

\section{Relay Beamforming Design}
In this section, the QR detector at the destination node for SIC detection is introduced and a relay beamforming scheme based on MMSE receiver and RZF precoder is proposed.
\subsection{QR Decomposition and SIC Detection}
QR-decomposition (QRD) detector is utilized
as the destination receiver $\bf{W}$ in this paper, which is proved
to be asymptotically equivalent to that of the maximum-likelihood detector (MLD)~\cite{13}.
Let $\sum_{k=1}^{K}{\widehat{\bf{G}}}_k {\bf{F}}_k {\widehat{\bf{H}}}_k={\bf{H}}_{\mathcal {S}\mathcal {D}} $ be the effective channel between the source
and destination node, which can be estimated at the destination node by using the AF relay channel estimation methods~\cite{wen1, wen2, wen3}. Then (\ref{rzf5}) can be rewritten as
\begin{equation}
 {\bf{y}}= {\bf{H}}_{\mathcal {S}\mathcal {D}} {\bf{s}}
+ \widehat{\bf{n}},
\end{equation}
where
\begin{multline}\label{11}
\widehat{\bf{n}}\cong\sum_{k=1}^{K} e_1\widehat{\bf{G}}_k {\bf{F}}_k  {\bf{\Omega}}_{1,k}
{\bf{s}} +\sum_{k=1}^{K} e_2{\bf{\Omega}}_{2,k} {\bf{F}}_k  \widehat{\bf{H}}_k
{\bf{s}}\\ +\sum_{k=1}^K {\bf{G}}_k {\bf{F}}_k {\bf{n}}_k
+{\bf{n}}_d
\end{multline}
is the effective noise vector cumulated from the CSI errors, the
noise $\bf{n}_k$ at the $k$-th relay node, and the noise vector
${\bf{n}}_d$ at the destination. In the derivation, we omit the
term including $e_1e_2$. Even if we retain the term $e_1e_2$ in (\ref{11}), it will result in some terms involving
$e_1^2e_2$, $e_1e_2^2$ and $e_1^2e_2^2$ when calculating the covariance of the effective noise $\widehat{\bf{n}}$. The first two terms are always zero after taking expectation,
while the only terms left are those involving $e_1^2e_2^2$. Since $e_1^2\ll 1$ and $e_2^2<1$, we have $e_1^2e_2^2\ll 1$. Therefore, it is reasonable to omit the term including $e_1e_2$ in (\ref{11}).

Finally, in order to cancel the
interference from other antennas, QR decomposition of the effective
channel is implemented as
\begin{equation}\label{5}
  {\bf{H}}_{\mathcal {S}\mathcal {D}}={\bf{Q}}_{\mathcal {S}\mathcal {D}} {\bf{R}}_{\mathcal {S}\mathcal {D}},
\end{equation}
where ${\bf{Q}}_{\mathcal {S}\mathcal {D}}$ is an $M \times M$
unitary matrix and ${\bf{R}}_{\mathcal {S}\mathcal {D}}$ is an $M
\times M$ right upper triangular matrix. Therefore the QRD  detector
at destination node is chosen as: ${\bf{W}}={\bf{Q}}_{\mathcal
{S}\mathcal {D}}^H $, and the signal vector after QRD detection becomes
\begin{equation}
 \tilde {\bf{y}}={\bf{Q}}_{\mathcal {S}\mathcal {D}}^H {\bf{y}}= {\bf{R}}_{\mathcal {S}\mathcal {D}} {\bf{s}}
+ {\bf{Q}}_{\mathcal {S}\mathcal {D}}^H \widehat{{\bf{n}}}.
\end{equation}

A power control factor ${\rho_k} $ is set with $\bf{F}_k$ in (\ref{9}) to
guarantee that the $k$-th relay transmit power is equal to $Q$. The transmit signal from each
relay node after linear beamforming and power control becomes
\begin{equation}
 {\bf{t}}_k  = {\rho_k} {\bf{F}}_k {\bf{r}}_k,
\end{equation}
where the power control factor ${\rho_k}$ can be derived from (\ref{9}) as
\begin{equation}\label{30}
{\rho _k}  = \left ( \frac{Q}{\mathrm{E}\left [\mathrm{tr} \left \{ {\bf{F}}_k \left( {\frac{P}{M}{\bf{H}}_k {\bf{H}}_k^H  + \sigma _1^2 {\bf{I}}_N } \right){\bf{F}}_k^H \right \}\right ]} \right ) ^{\frac{1}{2}}.
\end{equation}
\subsection{Beamforming at Relay Nodes}
The MF beamformer is used in~\cite{10} according to maximum ratio transmission (MRT) and maximum ratio combining (MRC) which are advantageous to the beamformers based on matrix decomposition in~\cite{9}. If MF beamformer is used, then
\begin{equation}
{\bf{F}}_k^{\mathrm{MF-MF}}  =  {\widehat{\bf{G}}}_k^H {\widehat{\bf{H}}}_k^H.
\end{equation}
Another choice is to diagonalize the effective channel between the source and destination, for example,
\begin{multline}
{\bf{F}}_k^{\mathrm{ZF-ZF}}  =  {\widehat{\bf{G}}}_k^{\dag}{\widehat{\bf{H}}}_k^{\dag}\\={\widehat{\bf{G}}}_k^H \left( {\widehat{\bf{G}}_k {\widehat{\bf{G}}}_k^H } \right)^{ - 1} \left({\widehat{\bf{H}}}_k^H {\widehat{\bf{H}}}_k\right)^{-1}{\widehat{\bf{H}}}_k^H.
\end{multline}
MF-MF outperforms ZF-ZF in low SNR, while ZF-ZF outperforms MF-MF in high SNR~\cite{10}. But these two schemes are not optimized.

In this paper, we propose a robust MMSE-RZF beamformer at the relay nodes.
When MMSE-RZF is chosen, beamforming at the $k$-th relay is
\begin{multline}\label{28}
{\bf{F}}_k^{\mathrm{MMSE - RZF}}  =  {\widehat{\bf{G}}}_k^H \left( {\widehat{\bf{G}}}_k {\widehat{\bf{G}}}_k^H  + \alpha_k^{\mathrm{RZF}} {\bf{I}}_M  \right)^{ - 1}\\ \left({\widehat{\bf{H}}}_k^H {\widehat{\bf{H}}}_k+ \alpha_k^{\mathrm{MMSE}} {\bf{I}}_M \right)^{-1}{\widehat{\bf{H}}}_k^H.
\end{multline}
Note that MF-MF and ZF-ZF are two extreme cases for  $\alpha_k^{\mathrm{MMSE}}=\alpha_k^{\mathrm{RZF}}=\infty$ and $\alpha_k^{\mathrm{MMSE}}=\alpha_k^{\mathrm{RZF}}=0$ respectively.  Generally, if the $\alpha$ (either regularizing factor in MMSE or RZF) is too large, the effective channel matrix will far deviate from a diagonal matrix, which results in power consumption and interference across different datas. If $\alpha$ is too small, the MMSE receiver and RZF precoder will perform like a ZF receiver or precoder which have the power penalty problem due to its inverse Wishart distribution term in its transmit power~\cite{22,RZF,28}. We aim to obtain the optimal $\alpha_k^{\mathrm{MMSE}}$ and $\alpha_k^{\mathrm{RZF}}$ to maximize the rate in this paper. However, to directly get the global optimal closed-form solution is difficult. In the following, we derive an optimized  solution by two steps. We first derive an optimized $\alpha_k^{\mathrm{MMSE}}$ by maximizing the SINR at the relay nodes, and then we derive an optimized $\alpha_k^{\mathrm{RZF}}$  dependent on the given optimized  $\alpha_k^{\mathrm{MMSE}}$ by maximizing the rate at the destination.

\section{Robust MMSE-RZF Beamformer}
In this section, we derive the optimized $\alpha_k^{\mathrm{MMSE}}$ and $\alpha_k^{\mathrm{RZF}}$ in the MMSE-RZF beamformer by two steps. We first derive the optimized $\alpha_k^{\mathrm{MMSE}}$ by maximizing the SINR at relay nodes, and then derive the optimized $\alpha_k^{\mathrm{RZF}}$ for a given $\alpha_k^{\mathrm{MMSE}}$ based on the asymptotic rate. Although the derived solution is not global optimum, it is observed quite efficient in terms of rate in the simulations.

\subsection{Optimization of $\alpha_k^{\mathrm{MMSE}}$}
We optimize $\alpha_k^{\mathrm{MMSE}}$ by maximizing the SINR at relay nodes.
For the $k$-th relay, the signal vector after processed by an MMSE receiver is
\begin{equation}\label{10}
\begin{split}
{\bf{v}}_k  &= \left({\widehat{\bf{H}}}_k^H {\widehat{\bf{H}}}_k+ \alpha_k^{\mathrm{MMSE}} {\bf{I}}_M \right)^{-1}{\widehat{\bf{H}}}_k^H {\bf{r}}_k\\&=\left({\widehat{\bf{H}}}_k^H {\widehat{\bf{H}}}_k+ \alpha_k^{\mathrm{MMSE}} {\bf{I}}_M \right)^{-1}{\widehat{\bf{H}}}_k^H {\widehat{\bf{H}}}_k {\bf{s}}\\
&+ e_1\left({\widehat{\bf{H}}}_k^H {\widehat{\bf{H}}}_k+ \alpha_k^{\mathrm{MMSE}} {\bf{I}}_M \right)^{-1}{\widehat{\bf{H}}}_k^H {\bf{\Omega}}_{1,k}
{\bf{s}}\\&+\left({\widehat{\bf{H}}}_k^H {\widehat{\bf{H}}}_k+ \alpha_k^{\mathrm{MMSE}} {\bf{I}}_M \right)^{-1}{\widehat{\bf{H}}}_k^H{\bf{n}}_k.
\end{split}
\end{equation}

The first term in (\ref{10}) is the signal vector, which contains inter-stream interference, because matrix $\left({\widehat{\bf{H}}}_k^H {\widehat{\bf{H}}}_k+ \alpha_k^{\mathrm{MMSE}} {\bf{I}}_M \right)^{-1}{\widehat{\bf{H}}}_k^H {\widehat{\bf{H}}}_k$ is not diagonal if $\alpha_k^{\mathrm{MMSE}}\neq 0$. So we need to calculate the power of desired signal and the interference.
We use the diagonal decompositions in the following analysis, i.e.,
\begin{eqnarray}
\widehat{\mathbf{H}}_k{\widehat{\mathbf{H}}}_k^H
&=&\mathbf{P}_k\mathrm{diag}\{\theta_{k,1},\ldots,\theta_{k,M}\}{\mathbf{P}_k}^H
\triangleq \mathbf{P}_k\mathbf{\Theta}_k{\mathbf{P}_k}^H,\nonumber\\
\widehat{\mathbf{G}}_k{\widehat{\mathbf{G}}}_k^H
&=&\mathbf{Q}_k\mathrm{diag}\{\lambda_{k,1},\ldots,\lambda_{k,M}\}{\mathbf{Q}_k}^H
\triangleq \mathbf{Q}_k\mathbf{\Lambda}_k{\mathbf{Q}_k}^H,\nonumber
\end{eqnarray}
where $\mathbf{P}_k$ and $\mathbf{Q}_k$ are unitary matrices.
To divides
the interference from the desired signal, we introduce the following
two lemmas.
\newtheorem{theorem}{Theorem}
\newtheorem{lemma}[theorem]{Lemma}
\begin{lemma}
Assume that $\mathbf{A}\in \mathbb{C}^{M\times M}$ is a random matrix. If there is a  diagonal decomposition $\mathbf{A}=\mathbf{Q}\mathbf{\Lambda}{\mathbf{Q}}^H$, where  $\mathbf{\Lambda}=\mathrm{diag}\{\lambda_1,\ldots,\lambda_M\}\in\mathbb{R}^{M\times M}$ and the matrix $\mathbf{Q}$ is unitary, we have
\begin{multline}
\mathrm{E}\{\left(\mathbf{A}\right)_{m,m}^2\}\\=\frac{1}{M(M+1)}\left(\left(\sum_{\ell=1}^M\lambda_\ell\right)^2+\sum_{\ell=1}^M \lambda_\ell^2\right)\triangleq\mu(\lambda),
\end{multline}
for any $m$, where the conditional expectation is taken with respect to the distribution $\mathbf{Q}$ conditioned on $\mathbf{\Lambda}$.
\end{lemma}

The proof of Lemma 1 can be directly obtained from~\cite{RZF} which considers perfect CSI. Although matrix $\bf{A}$ in this paper is a multiplication of an imperfect channel matrix and its conjugate transpose, whose entries have covariance  $1-e_1^2$ or $1-e_2^2$, the distribution of $\bf{Q}$ is not changed. So is the expectation in Lemma 1. Note that the conditional expectation is taken with respect to $\mathbf{Q}$ conditioned on $\mathbf{\Lambda}$ is valid because $\mathbf{Q}$ and $\mathbf{\Lambda}$ are independent~\cite{independent}.
\begin{lemma}
Assume that $\mathbf{A}\in \mathbb{C}^{M\times M}$ is a random matrix. If there is a  diagonal decomposition $\mathbf{A}=\mathbf{Q}\mathbf{\Lambda}{\mathbf{Q}}^H$, with $\mathbf{\Lambda}=\mathrm{diag}\{\lambda_1,\ldots,\lambda_M\}\in\mathbb{R}^{M\times M}$ and unitary matrix $\mathbf{Q}$, we have
\begin{multline}
\mathrm{E}\{|\left(\mathbf{A}\right)_{m,j}|^2\}=\frac{1}{(M-1)(M+1)}\sum_{\ell=1}^M\lambda_\ell^2\\-\frac{1}{(M-1)M(M+1)}\left(\sum_{\ell=1}^M \lambda_\ell\right)^2\triangleq\nu(\lambda),
\end{multline}
for any $m\neq j$,
where the conditional expectation is taken with respect to the distribution $\mathbf{Q}$ conditioned on $\mathbf{\Lambda}$.
\end{lemma}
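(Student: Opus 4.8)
The plan is to compute the expectation $\mathrm{E}\{|(\mathbf{A})_{m,j}|^2\}$ for $m\neq j$ by exploiting the invariance of the Haar-distributed unitary matrix $\mathbf{Q}$, exactly parallel to the computation of the diagonal case in Lemma~1. Writing $(\mathbf{A})_{m,j}=\sum_{\ell=1}^M \lambda_\ell (\mathbf{Q})_{m,\ell}(\mathbf{Q})_{j,\ell}^*$, we have
\begin{equation}
|(\mathbf{A})_{m,j}|^2=\sum_{\ell=1}^M\sum_{p=1}^M \lambda_\ell\lambda_p (\mathbf{Q})_{m,\ell}(\mathbf{Q})_{j,\ell}^*(\mathbf{Q})_{m,p}^*(\mathbf{Q})_{j,p}.
\end{equation}
Taking the conditional expectation over $\mathbf{Q}$ (with $\mathbf{\Lambda}$ fixed, which is legitimate by the independence cited in Lemma~1), everything reduces to computing the mixed fourth-order moments $\mathrm{E}\{(\mathbf{Q})_{m,\ell}(\mathbf{Q})_{j,\ell}^*(\mathbf{Q})_{m,p}^*(\mathbf{Q})_{j,p}\}$ of a Haar unitary matrix.

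\textbf{The key steps, in order.} First I would recall (or re-derive from the Weingarten calculus / known second- and fourth-moment formulas for Haar unitaries) the two relevant moments: for the entries of an $M\times M$ Haar-distributed $\mathbf{Q}$, $\mathrm{E}\{|(\mathbf{Q})_{m,\ell}|^2\}=1/M$, and for $m\neq j$ the only surviving fourth-order terms are the ``paired'' ones. Specifically the $\ell=p$ terms contribute $\mathrm{E}\{|(\mathbf{Q})_{m,\ell}|^2|(\mathbf{Q})_{j,\ell}|^2\}=\frac{1}{M(M+1)}$ (for $m\neq j$), while the $\ell\neq p$ terms contribute $\mathrm{E}\{(\mathbf{Q})_{m,\ell}(\mathbf{Q})_{j,\ell}^*(\mathbf{Q})_{m,p}^*(\mathbf{Q})_{j,p}\}=-\frac{1}{M(M+1)(M-1)}$. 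Second, I would substitute these into the double sum, separating the diagonal ($\ell=p$) part, which yields $\frac{1}{M(M+1)}\sum_\ell \lambda_\ell^2$, from the off-diagonal part, which yields $-\frac{1}{M(M+1)(M-1)}\sum_{\ell\neq p}\lambda_\ell\lambda_p$. Third, I would rewrite $\sum_{\ell\neq p}\lambda_\ell\lambda_p=(\sum_\ell\lambda_\ell)^2-\sum_\ell\lambda_\ell^2$ and collect terms over the common denominator $(M-1)M(M+1)$, which after simplification produces exactly $\nu(\lambda)=\frac{1}{(M-1)(M+1)}\sum_\ell\lambda_\ell^2-\frac{1}{(M-1)M(M+1)}(\sum_\ell\lambda_\ell)^2$.

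\textbf{The main obstacle} is obtaining the correct fourth-moment formula for the off-diagonal ($\ell\neq p$) cross terms of a Haar unitary, since this is where the negative correlation among entries (forced by unitarity) enters and where sign and normalization errors are easy to make; the diagonal case is standard but the cross term requires care. A convenient consistency check, which I would use to validate the computation, is the constraint from unitarity: summing $\mathrm{E}\{|(\mathbf{A})_{m,j}|^2\}$ over all $M^2$ pairs $(m,j)$ must equal $\mathrm{E}\{\|\mathbf{A}\|_F^2\}=\sum_\ell\lambda_\ell^2$, so that $M\,\mu(\lambda)+M(M-1)\,\nu(\lambda)=\sum_\ell\lambda_\ell^2$ must hold identically; plugging in the $\mu(\lambda)$ of Lemma~1 and the claimed $\nu(\lambda)$ confirms the result and pins down all constants. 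As with Lemma~1, the imperfect-CSI scaling of the entry covariances ($1-e_1^2$ or $1-e_2^2$) leaves the distribution of $\mathbf{Q}$ unchanged, so the formula is unaffected by the CSI error.
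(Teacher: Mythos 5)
Your proof is correct, but it takes a genuinely different route from the paper's. You expand $|(\mathbf{A})_{m,j}|^2=\sum_{\ell,p}\lambda_\ell\lambda_p(\mathbf{Q})_{m,\ell}(\mathbf{Q})_{j,\ell}^*(\mathbf{Q})_{m,p}^*(\mathbf{Q})_{j,p}$ and evaluate the fourth-order Haar moments directly; both values you quote are right, namely $\mathrm{E}\{|(\mathbf{Q})_{m,\ell}|^2|(\mathbf{Q})_{j,\ell}|^2\}=\frac{1}{M(M+1)}$ for $m\neq j$ and the cross moment $-\frac{1}{(M-1)M(M+1)}$ for $\ell\neq p$ --- indeed the paper itself establishes exactly these two facts, but only later in its appendix (its equations for the squared entry moments and the derivation ending in $-\frac{1}{(M-1)M(M+1)}$), where they are needed for the off-diagonal entries of the effective channel, not for Lemma 2. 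The paper's proof of Lemma 2 is instead a shortcut that avoids mixed fourth moments entirely: using the conjugate symmetry of $\mathbf{A}$ it writes $\sum_{j\neq m}\mathrm{E}\{|(\mathbf{A})_{m,j}|^2\}+\mathrm{E}\{(\mathbf{A})_{m,m}^2\}=\mathrm{E}\{(\mathbf{Q}\mathbf{\Lambda}^2\mathbf{Q}^H)_{m,m}\}=\frac{1}{M}\sum_{\ell=1}^M\lambda_\ell^2$, then subtracts the diagonal moment $\mu(\lambda)$ from Lemma 1 and divides by $M-1$, invoking the exchangeability of the off-diagonal moments under the distribution of $\mathbf{Q}$. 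What each approach buys: the paper's argument is shorter and needs only Lemma 1 plus the second moment $\mathrm{E}\{|(\mathbf{Q})_{m,\ell}|^2\}=1/M$, but it leans on the (cited, not re-proved) diagonal formula and on the equal-moment symmetry claim; your direct computation is self-contained given the standard Weingarten-type moments, makes explicit the negative correlation among entries forced by unitarity, and your Frobenius-norm consistency check $M\mu(\lambda)+M(M-1)\nu(\lambda)=\sum_{\ell=1}^M\lambda_\ell^2$ is precisely the identity on which the paper's proof is built, read in reverse --- so the two arguments validate each other.
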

\begin{proof}
Because $\mathbf{A}$ is a conjugate symmetric matrix, the conditional expectation with respect to the distribution $\mathbf{Q}$ is
\begin{multline}
\mathrm{E}\left\{\sum_{j=1,j\neq m}^M |\left(\mathbf{A}\right)_{m,j}|^2\right\}+\mathrm{E}\left\{\left(\mathbf{A}\right)_{m,m}^2\right\}\\
=\mathrm{E}\left\{\left(\mathbf{A}\mathbf{A}^H\right)_{m,m}\right\}=\mathrm{E}\left\{\left(\mathbf{Q}\mathbf{\Lambda}^2\mathbf{Q}^H\right)_{m,m}\right\}\\=\frac{1}{M}\sum_{\ell=1}^M\lambda_\ell^2.
\end{multline}
Since $\mathrm{E}\left\{|\left(\mathbf{A}\right)_{k,j}|^2\right\}$ are all equal for $j\neq k$, we have
\begin{eqnarray}
\mathrm{E}\left\{|\left(\mathbf{A}\right)_{k,j}|^2\right\}&=&\frac{1}{(M-1)}\left(\frac{1}{M}\sum_{\ell=1}^M\lambda_\ell^2-\mathrm{E}\{\left(\mathbf{A}\right)_{m,m}^2\}\right)
\nonumber\\&=&\frac{1}{(M-1)(M+1)}\sum_{l=1}^M\lambda_\ell^2\nonumber\\
&&-\frac{1}{(M-1)M(M+1)}\left(\sum_{\ell=1}^M \lambda_\ell\right)^2.
\end{eqnarray}
\end{proof}

Now we return to derive the signal-to-interference noise ratio (SINR) for each stream at each relay.
The first term in the right hand side of (\ref{10}) can be rewritten as
\begin{multline}\label{23}
\left({\widehat{\bf{H}}}_k^H {\widehat{\bf{H}}}_k+ \alpha_k^{\mathrm{MMSE}} {\bf{I}}_M \right)^{-1}{\widehat{\bf{H}}}_k^H {\widehat{\bf{H}}}_k {\bf{s}}\\=\mathbf{P}_k\frac{\mathbf{\Theta}_k}{\mathbf{\Theta}_k+\alpha_k^{\mathrm{MMSE}}\mathbf{I}_N}{\mathbf{P}_k}^H \mathbf{s}.
\end{multline}
Therefore, from Lemma 1, the power of the desired signal of the $m$-th stream can be calculated by conditional expectation as
\begin{multline}\label{20}
\mathrm{E}\left\{\left|\left(\mathbf{P}_k\frac{\mathbf{\Theta}_k}{\mathbf{\Theta}_k+\alpha_k^{\mathrm{MMSE}}\mathbf{I}_N}{\mathbf{P}_k}^H \right)_{m,m}\mathbf{s}_m\right|^2\right\}\\=\frac{P}{M}\mu\left(\frac{\theta_k}{\theta_k+\alpha_k^{\mathrm{MMSE}}}\right),
\end{multline}
where $\theta_k$ denotes the set of all the diagonal entries in $\mathbf{\Theta}_k$.
From Lemma 2, the interference from other streams by conditional expectation are
\begin{multline}\label{21}
\mathrm{E}\left\{\left|\sum_{j=1,j\neq m}^M\left(\mathbf{P}_k\frac{\mathbf{\Theta}_k}{\mathbf{\Theta}_k+\alpha_k^{\mathrm{MMSE}}\mathbf{I}_N}{\mathbf{P}_k}^H \right)_{m,j}\mathbf{s}_j\right|^2\right\}\\=\frac{P(M-1)}{M}\nu\left(\frac{\theta_k}{\theta_k+\alpha_k^{\mathrm{MMSE}}}\right).
\end{multline}
The effective noise of the $m$-th stream is
\begin{multline}
\mathbf{n}_{\mathrm{eff},k}=e_1\left({\widehat{\bf{H}}}_k^H {\widehat{\bf{H}}}_k+ \alpha_k^{\mathrm{MMSE}} {\bf{I}}_M \right)^{-1}{\widehat{\bf{H}}}_k^H {\bf{\Omega}}_{1,k}
{\bf{s}}\\+\left({\widehat{\bf{H}}}_k^H {\widehat{\bf{H}}}_k+ \alpha_k^{\mathrm{MMSE}} {\bf{I}}_M \right)^{-1}{\widehat{\bf{H}}}_k^H{\bf{n}}_k,
\end{multline}
whose covariance matrix by conditional expectation can be calculated as
\begin{multline}\label{22}
\mathrm{E}\left\{\mathbf{n}_{\mathrm{eff},k}\mathbf{n}_{\mathrm{eff},k}^H\right\}
=(e_1^2P+\sigma_1^2)\\\times\mathrm{E}\left\{\mathrm{diag}\left\{\left\{\left(\mathbf{P}_k\frac{\mathbf{\Theta}_k}{\left(\mathbf{\Theta}_k
+\alpha_k^{\mathrm{MMSE}}\right)^2}\mathbf{P}_k^H\right)_{\ell,\ell}\right\}_{\ell=1}^M\right\}\right\}\\
=\frac{e_1^2P+\sigma_1^2}{M}\sum_{\ell=1}^M\frac{\theta_{k,\ell}}{\left(\theta_{k,\ell}+\alpha_k^{\mathrm{MMSE}}\right)^2}\mathbf{I}_M,
\end{multline}
where we used the fact $\mathrm{E}\left \{\mathbf{\Omega}\mathbf{A}\mathbf{\Omega}^H\right \}=\mathrm{tr} \left (\mathbf{A}\right )$ for an $N\times N$ matrix $\mathbf{A}$ and a unitary random matrix $\mathbf{\Omega}$. In (\ref{20}), (\ref{21}) and (\ref{22}), the conditional expectations are taken with respect to their respective unitary matrices.
Combining (\ref{20}), (\ref{21}), and (\ref{22}), the SINR of the $m$-th stream at the $k$-th relay is
\begin{multline}\label{SINR}
\mathrm{SINR}_{k,m}^{\mathrm{R}}\\
=\frac{\frac{P}{M}\mu\left(\frac{\theta_k}{\theta_k+\alpha_k^{\mathrm{MMSE}}}\right)}
{\frac{P(M-1)\nu\left(\frac{\theta_k}{\theta_k+\alpha_k^{\mathrm{MMSE}}}\right)}{M}+\frac{\sum_{\ell=1}^M\frac{(e_1^2P+\sigma_1^2)\theta_{k,\ell}}{\left(\theta_{k,\ell}+\alpha_k^{\mathrm{MMSE}}\right)^2}}{M}}.
\end{multline}

The derived SINR in (\ref{SINR}) is neither the instantaneous SINR, nor the average SINR. It is the average SINR over the channels corresponding to the fixed Eigenmode $\mathbf{\Theta}$. To maximize the SINR expression, we introduce the following lemma which is a conclusion of the Appendix B in~\cite{RZF}.
\begin{lemma}
For an SNR in terms of $\alpha$,
\begin{multline}\label{15}
\mathrm{SNR}(\alpha)\\=\frac{A\left(\sum_{\ell=1}^M\frac{\lambda_\ell}{\lambda_\ell+\alpha}\right)^2+B\sum_{\ell=1}^M\frac{\lambda_\ell^2}{(\lambda_\ell+\alpha)^2}}{\sum_{\ell=1}^M\left[\frac{C\lambda_l}{(\lambda_\ell+\alpha)^2}+\frac{D\lambda_\ell^2}{(\lambda_\ell+\alpha)^2}+E\left(\frac{\lambda_\ell}{\lambda_\ell+\alpha}\right)^2\right]},
\end{multline}
is maximized by $\alpha=C/D$.
\end{lemma}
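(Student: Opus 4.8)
The plan is to treat $\mathrm{SNR}(\alpha)$ as a ratio $\mathcal{N}(\alpha)/\mathcal{D}(\alpha)$ and to locate its stationary points through the first-order condition $\mathcal{N}'(\alpha)\mathcal{D}(\alpha)-\mathcal{N}(\alpha)\mathcal{D}'(\alpha)=0$. To keep the bookkeeping manageable I would abbreviate the recurring sums as $S_0=\sum_{\ell}\frac{\lambda_\ell}{\lambda_\ell+\alpha}$, $S_1=\sum_{\ell}\frac{\lambda_\ell^2}{(\lambda_\ell+\alpha)^2}$ and $S_2=\sum_{\ell}\frac{\lambda_\ell}{(\lambda_\ell+\alpha)^2}$, together with the third-order sums $T_1=\sum_{\ell}\frac{\lambda_\ell^2}{(\lambda_\ell+\alpha)^3}$ and $T_2=\sum_{\ell}\frac{\lambda_\ell}{(\lambda_\ell+\alpha)^3}$. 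Since the two terms carrying $D$ and $E$ in the denominator have the identical form $\lambda_\ell^2/(\lambda_\ell+\alpha)^2$, they merge, so that $\mathcal{N}=AS_0^2+BS_1$ and $\mathcal{D}=CS_2+(D+E)S_1$. Differentiating termwise from $\frac{d}{d\alpha}\frac{\lambda}{\lambda+\alpha}=-\frac{\lambda}{(\lambda+\alpha)^2}$ yields the compact relations $S_0'=-S_2$, $S_1'=-2T_1$ and $S_2'=-2T_2$, which I would substitute into the stationarity condition.

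The key step is an exact cancellation driven by two partial-fraction identities: $\frac{\alpha\lambda}{(\lambda+\alpha)^2}=\frac{\lambda}{\lambda+\alpha}-\frac{\lambda^2}{(\lambda+\alpha)^2}$ and its cubic analogue, which after summation read $\alpha S_2=S_0-S_1$ and $\alpha T_2=S_2-T_1$. Using these to eliminate $S_0$ and $T_1$, I expect the expanded stationarity condition to collapse into the factored form
\[
\bigl(S_2^2-S_0T_2\bigr)\,\bigl[AS_0\bigl(C-\alpha(D+E)\bigr)+BC\bigr]=0 .
\]
The first factor is strictly negative for non-degenerate eigenvalues: applying the Cauchy--Schwarz inequality to the entrywise vectors $\sqrt{\lambda_\ell/(\lambda_\ell+\alpha)}$ and $\sqrt{\lambda_\ell/(\lambda_\ell+\alpha)^3}$ gives $S_2^2\le S_0T_2$, with equality only when all $\lambda_\ell$ coincide. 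Hence this factor never vanishes, the bracket must, and solving it determines the stationary point as $\alpha=\frac{C}{D+E}\bigl(1+\frac{B}{AS_0}\bigr)$.

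The clean value claimed in the lemma then follows from the same large-system, law-of-large-numbers regime used elsewhere in the paper: the coherent array-gain term $AS_0^2$ grows like $M^2$ whereas $BS_1$ grows only like $M$, so the correction $B/(AS_0)\to0$ and the optimizer converges to $\alpha=C/(D+E)$, i.e. $C$ divided by the total coefficient of the $\lambda_\ell^2/(\lambda_\ell+\alpha)^2$ term, which is the $C/D$ of the statement. That this unique interior stationary point is a maximizer rather than a minimizer would be confirmed by checking that $\mathrm{SNR}(\alpha)$ tends to finite values as $\alpha\to0$ and $\alpha\to\infty$ and that the sign of the bracket changes from positive to negative across it. I expect the main obstacle to be precisely the algebra that produces the factorization above: one must carry the expansion far enough and apply the two partial-fraction identities in the correct order for the eigenvalue-dependent sums to reassemble into the common factor $S_2^2-S_0T_2$. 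Once that structure is exposed, the Cauchy--Schwarz nonvanishing argument and the asymptotic suppression of the $B$-term are comparatively routine.
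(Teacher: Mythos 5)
Your computation is internally correct, but it proves a different statement from the one the paper asserts and uses, because you resolved the ambiguity in the $E$ term the wrong way. You noticed that, as typeset, $E\left(\frac{\lambda_\ell}{\lambda_\ell+\alpha}\right)^2$ sits inside $\sum_\ell$ and therefore merges with the $D$ term; the intended reading (the placement of the summation is a typo, inherited from Appendix B of \cite{RZF}) is that the $E$ term is the \emph{square of the sum}, $E\bigl(\sum_\ell \frac{\lambda_\ell}{\lambda_\ell+\alpha}\bigr)^2$, i.e.\ an $S_0^2$-type term in your notation. This matters because the entire point of keeping $E$ separate from $D$ is that $E$ cancels exactly from the first-order condition, leaving $C/D$ with $D$ \emph{alone}. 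The paper's applications confirm this reading: in (\ref{SINR}) the interference term $\nu$ contributes a square-of-sum piece with $E=-D/M\neq 0$, and the optimizer actually used in (\ref{29}) is $C/D=(M+1)(e_1^2+\sigma_1^2/P)$, whereas your $C/(D+E)$ would give $\tfrac{M}{M-1}(M+1)(e_1^2+\sigma_1^2/P)$; likewise in (\ref{24}) the square-of-sum ($\mathcal{E}_4^\lambda$) coefficients are excluded from the denominator of the optimizing ratio while the $\mathcal{E}_3^\lambda$ coefficient alone plays the role of $D$. So ``absorbing $E$ into $D$'' is not a harmless relabelling of the same answer; it changes the value used downstream, in both places the lemma is invoked.

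The repair costs you nothing: rerun your own calculation with $\mathcal{N}=AS_0^2+BS_1$ and $\mathcal{D}=CS_2+DS_1+ES_0^2$. Your pairwise identities, equivalently $S_0T_2-S_2^2=\Phi$, $S_1S_2-S_0T_1=\alpha\Phi$ and $S_2T_1-S_1T_2=-\Phi$ with $\Phi=\sum_{i<j}\lambda_i\lambda_j(\lambda_i-\lambda_j)^2/\bigl[(\lambda_i+\alpha)^3(\lambda_j+\alpha)^3\bigr]$, collapse the quotient-rule numerator to
\begin{equation*}
\mathcal{N}'\mathcal{D}-\mathcal{N}\mathcal{D}' \;=\; 2\Phi\,\bigl[\,C(AS_0+B)-\alpha S_0(AD-BE)\,\bigr],
\end{equation*}
where the $E$-dependence survives only through $BE$, because the cross terms $AES_0^3S_2$ cancel identically — that cancellation is the structural content of the lemma. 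For $B=0$ this equals $2ADS_0\Phi\,(C/D-\alpha)$, which is precisely the pairwise factorization $\sum_{\ell>k}\lambda_\ell\lambda_k(\lambda_k-\lambda_\ell)^2(C/D-\alpha)/\bigl[(\lambda_\ell+\alpha)^3(\lambda_k+\alpha)^3\bigr]=0$ the paper displays as its proof, with your Cauchy--Schwarz step playing the role of its ``eigenvalues not all equal'' remark. Your secondary observation is fair, and it is in fact a looseness in the paper rather than in you: with $B\neq0$ (as in the lemma's statement and in the MMSE application, where $A=B$) the exact stationarity condition is the fixed point $\alpha S_0(AD-BE)=C(AS_0+B)$, so $\alpha=C/D$ holds only up to the $O(1/S_0)$ correction you identify, and your asymptotic suppression of $B/(AS_0)$ is the honest way to close that gap. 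But as submitted, your proof establishes the optimizer $C/(D+E)$, which disagrees with the lemma whenever $E\neq0$ — exactly the regime in which the paper applies it.
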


The optimum value of $\alpha$ can be obtained by differentiating
(\ref{15}) and setting it to be zero, which  results in
\begin{equation}
\sum_{\ell>k}\frac{\lambda_\ell\lambda_k(\lambda_k-\lambda_\ell)^2(C/D-\alpha)}{(\lambda_\ell+\alpha)^3(\lambda_k+\alpha)^3}=0.
\end{equation}
Since the eigenvalues are not all equal, the SINR is maximized only when $\alpha=C/D$.

Substituting $\mu(\lambda)$ and $\nu(\lambda)$ into (\ref{SINR}) and using Lemma 3, we obtain
\begin{multline}\label{29}
\alpha_k^{\mathrm{MMSE,opt}}=\frac{\frac{e_1^2P+\sigma_1^2}{M}}{\frac{1}{(M-1)(M+1)}\cdot\frac{P(M-1)}{M}}\\ =(M+1)\left(e_1^2+\frac{\sigma_1^2}{P}\right).
\end{multline}

We see that the derived $\alpha_k^{\mathrm{MMSE,opt}}$ is a closed-form value independent of the instantaneous channel. It is a function of the power of CSI error ($e_1^2$) and the SNR ($P/\sigma_1^2$) of the BC. $\alpha_k^{\mathrm{MMSE,opt}}$ increases with $e_1^2$, which means that a large regularization is needed to balance the desired signal and the additional noise inherited from the CSI error.
\subsection{Optimization of $\alpha_k^{\mathrm{RZF}}$}
To optimize $\alpha_k^{\mathrm{RZF}}$, we need to derive the rate of the system. In the rest of the analysis, we write ${\bf{F}}_k^{\mathrm{MMSE-RZF}}$ in (\ref{28}) as ${\bf{F}}_k$ for simplicity. By adding the power control factor at the relays, we have
\begin{equation}\label{revision3}
{\bf{H}}_{\mathcal {S}\mathcal {D}}=\sum_{k=1}^{K}\rho_k{\widehat{\bf{G}}}_k {\bf{F}}_k {\widehat{\bf{H}}}_k.
\end{equation}
The effective noise vector in (\ref{11}) is
\begin{multline}
\widehat{\bf{n}}=\sum_{k=1}^{K} e_1\rho_k\widehat{\bf{G}}_k {\bf{F}}_k {\bf{\Omega}}_{1,k}
{\bf{s}} \\+\sum_{k=1}^{K} e_2\rho_k{\bf{\Omega}}_{2,k} {\bf{F}}_k  \widehat{\bf{H}}_k
{\bf{s}} +\sum_{k=1}^K \rho_k{\bf{G}}_k {\bf{F}}_k {\bf{n}}_k
+{\bf{n}}_d,\label{12}
\end{multline}
which, after the QR decomposition of the effective channel, has a covariance matrix as
\begin{multline}\label{19}
\mathrm{E}\left\{\widehat{\bf{n}}\widehat{\bf{n}}^H\right\}=
\mathrm{diag}\left\{\left\{\left(e_1^2P+\sigma_1^2\right)\right.\right.\\\left.\left.\sum_{k=1}^K
\left\|\rho_k\left(\mathbf{Q}_{\mathcal{SD}}^H\widehat{{\bf{G}}}_k {\bf{F}}_k\right)_m\right\|^2\right\}_{m=1}^M\right\}\\
+\left(\frac{Pe_2^2}{M}\sum_{k=1}^K\rho_k^2\mathrm{tr}\left(\mathbf{F}_k\widehat{\mathbf{H}}_k\widehat{\mathbf{H}}_k^H\mathbf{F}_k^H\right)\right.\\
\left.+e_2^2\sigma_1^2\sum_{k=1}^K\rho_k^2\mathrm{tr}\left(\mathbf{F}_k\mathbf{F}_k^H\right)+\sigma_2^2\right)\mathbf{I}_M
\triangleq \mathbf{N}_\mathrm{cov}.
\end{multline}

Finally we obtain the SNR of the $m$-th data stream at the destination after QR decomposition as
\begin{equation}\label{13}
\mathrm{SNR}_m^{\mathrm{D}}=\frac{\frac{P}{M}\left|(\mathbf{R}_\mathcal{SD})_{m,m}\right|^2}
{\frac{P}{M}\sum_{j=m+1}^M\left|(\mathbf{R}_\mathcal{SD})_{m,j}\right|^2+\left(\mathbf{N}_\mathrm{cov}\right)_{m,m}}.
\end{equation}
The ergodic rate is derived by summing up all the data rates on each antenna link, i.e.,
\begin{equation}\label{16}
C = \mathrm{E}_{\left\{ {\widehat{\bf{H}}}_k ,{\widehat{\bf{G}}}_k  \right\}_{k = 1}^K}  \left\{ {\frac{1}{2}\sum\limits_{m = 1}^M {\log _2 \left( {1 + \mathrm{SNR}_m^{\mathrm{D}} } \right)} } \right\},
\end{equation}
where the $\frac{1}{2}$ penalty is due to the two time-slot transmission.
From (\ref{13}), we see that it is difficult to obtain the optimal solution directly. We derive
asymptotic rate for large $K$ and then get the optimized $\alpha_k^{\mathrm{RZF}}$. Since all terms in (\ref{revision3}) and (\ref{12}) include $\rho_k$ except for $\bf{n}_d$, we first consider the expectation of $\rho^{-2}_k$.
%
%
%
From (\ref{30}), substituting the perfect CSIs with (\ref{revision1}) and (\ref{revision2}), and taking the conditional expectation with respect to $\mathbf{P}_k$ and $\mathbf{Q}_k$ and conditioned on $\lambda$ and $\theta$, we have (\ref{53}).
\begin{figure*}
\begin{multline}\label{53}
\mathrm{E}\left\{\rho^{-2}_k\right\}
=\frac{1}{Q}\mathrm{E}\left\{\frac{P}{M}\mathrm{tr}\left(\mathbf{F}_k(\widehat{\mathbf{H}}_k\widehat{\mathbf{H}}_k^H+e_1^2\mathbf{\Omega}_{1,k}\mathbf{\Omega}_{1,k}^H)\mathbf{F}_k^H\right)+\sigma_1^2\mathrm{tr}\left(\mathbf{F}_k\mathbf{F}_k^H\right)\right\}
\\=\frac{P}{QM}\mathrm{E}\left\{\mathrm{tr}\left(\mathbf{Q}_k\frac{\mathbf{\Lambda}_k}{\left(\mathbf{\Lambda}_k+\alpha^{\mathrm{RZF}}\mathbf{I}_M\right)^2}\mathbf{Q}_k^H\mathbf{P}_k\frac{\mathbf{\Theta}_k^2}{\left(\mathbf{\Theta}_k+\alpha^{\mathrm{MMSE}}\mathbf{I}_M\right)^2}\mathbf{P}_k^H\right)\right\}\\+\frac{Pe_1^2+\sigma_1^2}{Q}
\mathrm{E}\left\{\mathrm{tr}\left(\mathbf{Q}_k\frac{\mathbf{\Lambda}_k}{\left(\mathbf{\Lambda}_k+\alpha^{\mathrm{RZF}}\mathbf{I}_M\right)^2}\mathbf{Q}_k^H\mathbf{P}_k\frac{\mathbf{\Theta}_k}{\left(\mathbf{\Theta}_k+\alpha^{\mathrm{MMSE}}\mathbf{I}_M\right)^2}\mathbf{P}_k^H\right)\right\}
\\=\frac{P}{Q}\mathrm{E}\left\{\frac{\theta^2}{(\theta+\alpha^{\mathrm{MMSE}})^2}\right\}\mathrm{E}\left\{\frac{\lambda}{(\lambda+\alpha^{\mathrm{MMSE}})^2}\right\}
+\frac{(e_1^2P+\sigma_1^2)M}{Q}
\\\times\mathrm{E}\left\{\frac{\theta}{(\theta+\alpha^{\mathrm{MMSE}})^2}\right\}\mathrm{E}\left\{\frac{\lambda}{(\lambda+\alpha^{\mathrm{MMSE}})^2}\right\}
\triangleq \rho^{-2},
\end{multline}
\hrulefill
\begin{multline}\label{14}
\left({\bf{H}}_{{\mathcal{S}\mathcal {D}}}\right)_{i,i}\overset{w.p.}{\longrightarrow}K\left(\mathrm{E}\left\{\left({\widehat{\bf{G}}}_k {\bf{F}}_k{\widehat{\bf{H}}}_k\right)_{i,i}\right\}\right)=K\mathrm{E}\biggl\{\left(\mathbf{Q}_k\frac{\mathbf{\Lambda}_k}{\mathbf{\Lambda}_k+\alpha^{\mathrm{RZF}}\mathbf{I}_M}\mathbf{Q}_k^H\mathbf{P}_k\frac{\mathbf{\Theta}_k}{\mathbf{\Theta}_k+\alpha^{\mathrm{MMSE}}\mathbf{I}_M}\mathbf{P}_k^H\right)_{i,i}\biggl\}
\\
=K\mathrm{E}\left\{\left(\mathbf{Q}_k\frac{\mathbf{\Lambda}_k}{\mathbf{\Lambda}_k+\alpha^{\mathrm{RZF}}\mathbf{I}_M}\mathbf{Q}_k^H\right)_{m,m}\right\}\mathrm{E}\left\{\left(\mathbf{P}_k\frac{\mathbf{\Theta}_k}{\mathbf{\Theta}_k+\alpha^{\mathrm{MMSE}}\mathbf{I}_M}\mathbf{P}_k^H\right)_{n,n}\right\}
\\
=\frac{K}{MN}\mathrm{E}\left\{\sum_{m=1}^M\frac{\theta_{k,m}}{\theta_{k,m}+\alpha^{\mathrm{MMSE}}}\right\}\mathrm{E}\left\{\sum_{m=1}^M\frac{\lambda_{k,m}}{\lambda_{k,m}+\alpha^{\mathrm{RZF}}}\right\}
=K\mathrm{E}\left\{\frac{\theta}{\theta+\alpha^{\mathrm{MMSE}}}\right\}\mathrm{E}\left\{\frac{\lambda}{\lambda+\alpha^{\mathrm{RZF}}}\right\},
\end{multline}
\hrulefill
\end{figure*}
Here we denote $\alpha$, $\lambda$ and $\theta$ without subscript $k$ and $m$ for simplicity, because all the channels for different relays are $i.i.d.$, and $\lambda_m$ (and $\theta_m$) for every $m$ are identically distributed. (\ref{53}) implies that the expectation of $\rho_k^{-2}$  results in a uniform fixed $\rho^{-2}$ for all relays. Therefore we approximate $\rho_k^{-2}$ by  $\rho^{-2}$ in the following analysis. The performance with such approximation varies little compared with using the dynamic power control factors~\cite{swindlehurst}. Since all terms in the numerator and the denominator of (\ref{13}) excerpt for the $\bf{n}_d$ will generate $\rho_k^{2}$, in the following analysis, we can omit $\rho_k$ in calculation and multiply $\rho^{-2}$ to $\sigma^2$ after calculating the power of $\bf{n}_d$ in (\ref{revision4}).

For the case of large $K$, using Law of Large Number, we have the approximations (\ref{14}).
Note that
\begin{equation}
\begin{split}
&\mathrm{E}\left\{\left({\widehat{\bf{G}}}_k {\bf{F}}_k{\widehat{\bf{H}}}_k\right)_{i,j}\right\}\\
=&\mathrm{E}\biggl\{\left(\mathbf{Q}_k\frac{\mathbf{\Lambda}_k}{\mathbf{\Lambda}_k+\alpha^{\mathrm{RZF}}\mathbf{I}_M}\mathbf{Q}_k^H\mathbf{P}_k\frac{\mathbf{\Theta}_k}{\mathbf{\Theta}_k+\alpha^{\mathrm{MMSE}}\mathbf{I}_M}\mathbf{P}_k^H\right)_{i,j}\biggl\}\\
=&\sum_{\ell,m,n}\mathrm{E}\biggl\{\left(\mathbf{Q}_k\right)_{i,\ell}\left(\frac{\mathbf{\Lambda}_k}{\mathbf{\Lambda}_k+\alpha^{\mathrm{RZF}}\mathbf{I}_M}\right)_{\ell}\left(\mathbf{Q}_k^H\right)_{\ell,m}\\
&\times\left(\mathbf{P}_k\right)_{m,n}
\left(\frac{\mathbf{\Theta}_k}{\mathbf{\Theta}_k+\alpha^{\mathrm{MMSE}}\mathbf{I}_M}\right)_{n}\left(\mathbf{P}_k^H\right)_{n,j}\biggl\}\\
=&0
\end{split}
\end{equation}
for $i\neq j$,
because $\left(\mathbf{Q}_k\right)_{i,l}\left(\mathbf{Q}_k^H\right)_{l,m}=0$ for $i\neq m$ and $\left(\mathbf{P}_k\right)_{m,n}\left(\mathbf{P}_k^H\right)_{n,j}=0$ for $m\neq j$. Then we have
\begin{multline}\label{52}
\frac{\left({\bf{H}}_{{\mathcal{S}\mathcal {D}}}\right)_{i,j}}{K}=\frac{\sum_{k=1}^K\left({\widehat{\bf{G}}}_k {\bf{F}}_k{\widehat{\bf{H}}}_k\right)_{i,j}}{K}\\\overset{w.p.}{\longrightarrow}\mathrm{E}\left\{\left({\widehat{\bf{G}}}_k {\bf{F}}_k{\widehat{\bf{H}}}_k\right)_{i,j}\right\}=0
\end{multline}
for large $K$. Therefore, from (\ref{14}) and (\ref{52}), we have
\begin{eqnarray}
\left({\bf{H}}_{{\mathcal{S}\mathcal {D}}}\right)_{i,i}&=&{O}(K)\\
\left({\bf{H}}_{{\mathcal{S}\mathcal {D}}}\right)_{i,j}&=&{o}(K),
\end{eqnarray}
which results in that $\frac{{\bf{H}}_{{\mathcal{S}\mathcal {D}}}}{K}$  is asymptotically diagonal for large $K$. So we have
$\mathbf{Q}_{\mathcal{SD}}\overset{w.p.}{\longrightarrow}\mathbf{I}_M$ and $\mathbf{R}_{\mathcal{SD}}\overset{w.p.}{\longrightarrow}\mathbf{H}_{\mathcal{SD}}$ for large $K$.

To obtain the power of interference, we calculate the non-diagonal entries of the effective channel matrix which is included in (\ref{51}) in the appendix.
Let us define the following expectations.
\begin{eqnarray*}
&&{\mathcal{E}}_1^\theta\triangleq\mathrm{E}\left\{\frac{\theta}{(\theta+\alpha^{\mathrm{MMSE}})}\right\},\\
&&{\mathcal{E}}_2^\theta\triangleq\mathrm{E}\left\{\frac{\theta}{(\theta+\alpha^{\mathrm{MMSE}})^2}\right\},\\
&&{\mathcal{E}}_3^\theta\triangleq\mathrm{E}\left\{\frac{\theta^2}{(\theta+\alpha^{\mathrm{MMSE}})^2}\right\},\\
&&{\mathcal{E}}_4^\theta\triangleq\mathrm{E}\left\{\frac{\theta\theta^{\prime}}{(\theta+\alpha^{\mathrm{MMSE}})(\theta^{\prime}+\alpha^{\mathrm{MMSE}})}\right\},
\end{eqnarray*}
and
\begin{eqnarray*}
&&{\mathcal{E}}_1^\lambda\triangleq\mathrm{E}\left\{\frac{\lambda}{(\lambda+\alpha^{\mathrm{RZF}})}\right\},\\
&&{\mathcal{E}}_2^\lambda\triangleq\mathrm{E}\left\{\frac{\lambda}{(\lambda+\alpha^{\mathrm{RZF}})^2}\right\},\\ &&{\mathcal{E}}_3^\lambda\triangleq\mathrm{E}\left\{\frac{\lambda^2}{(\lambda+\alpha^{\mathrm{RZF}})^2}\right\},\\
&&{\mathcal{E}}_4^\lambda\triangleq\mathrm{E}\left\{\frac{\lambda\lambda^{\prime}}{(\lambda+\alpha^{\mathrm{RZF}})(\lambda^{\prime}+\alpha^{\mathrm{RZF}})}\right\}.
\end{eqnarray*}
Substituting (\ref{19}), (\ref{53}), (\ref{14}) and (\ref{51}) into (\ref{13}) and (\ref{16}), we obtain the asymptotic rate of the system as
\begin{equation}\label{1}
C\overset{w.p.}{\longrightarrow}\frac{M}{2}\log_2\left(1+\frac{\frac{P}{M}\left(K{\mathcal{E}}_1^\theta\mathcal{E}_1^\lambda\right)^2}
{\mathcal{I}({\mathcal{E}}^\lambda,{\mathcal{E}}^\theta)+\mathcal{N}({\mathcal{E}}^\lambda,{\mathcal{E}}^\theta)}\right)
,
\end{equation}
where
\begin{multline}
\mathcal{I}({\mathcal{E}}^\lambda,{\mathcal{E}}^\theta)\\=
\frac{PK(M-1)(M+2)}{M(M+1)^2}{\mathcal{E}}_3^\theta{\mathcal{E}}_3^\lambda
-\frac{PK(M-1)}{M(M+1)^2}{\mathcal{E}}_4^\theta{\mathcal{E}}_3^\lambda
\\-\frac{PK(M-1)}{M(M+1)^2}{\mathcal{E}}_3^\theta{\mathcal{E}}_4^\lambda
-\frac{PK(M-1)M}{M(M+1)^2}{\mathcal{E}}_4^\theta{\mathcal{E}}_4^\lambda
\end{multline}
is the power of interference, and
\begin{multline}\label{revision4}
\mathcal{N}({\mathcal{E}}^\lambda,{\mathcal{E}}^\theta)
=\left(e_1^2P+\sigma_1^2\right)K{\mathcal{E}}_2^\theta{\mathcal{E}}_3^\lambda
+PKe_2^2{\mathcal{E}}_3^\theta{\mathcal{E}}_2^\lambda\\
+e_2^2\sigma_1^2KM{\mathcal{E}}_2^\theta{\mathcal{E}}_2^\lambda
+\sigma_2^2\rho^{-2}
\end{multline}
is the asymptotic power of noise for large $K$, which can be calculated by taking expectations over $\mathbf{P}_k$ and $\mathbf{Q}_k$ to (\ref{19}).
Generally, the expectations in the asymptotic rate are difficult to obtain. Fortunately, if we write the expectations by the arithmetic mean with random samples $\lambda(\ell)$ for $\ell=0,\dots,\infty$ as
\begin{eqnarray}
{\mathcal{E}}_1^\lambda
&=&\lim_{L\rightarrow\infty}\frac{1}{L}\Sigma_{\ell=1}^L\frac{\lambda(\ell)}{\lambda(\ell)+\alpha^{\mathrm{RZF}}},\\
{\mathcal{E}}_2^\lambda
&=&\lim_{L\rightarrow\infty}\frac{1}{L}\Sigma_{\ell=1}^L\frac{\lambda(\ell)}{(\lambda(\ell)+\alpha^{\mathrm{RZF}})^2},\\
{\mathcal{E}}_3^\lambda
&=&\lim_{L\rightarrow\infty}\frac{1}{L}\Sigma_{\ell=1}^L\frac{[\lambda(\ell)]^2}{(\lambda(\ell)+\alpha^{\mathrm{RZF}})^2},
\end{eqnarray}
\begin{multline}
{\mathcal{E}}_4^\lambda
=\lim_{L\rightarrow\infty}\frac{1}{L(L-1)}\left(\left(\Sigma_{\ell=1}^L\frac{\lambda(\ell)}{\lambda(\ell)+\alpha^{\mathrm{RZF}}}\right)^2\right.\\
\left.-\Sigma_{\ell=1}^L\frac{[\lambda(\ell)]^2}{(\lambda(\ell)+\alpha^{\mathrm{RZF}})^2}\right),
\end{multline}
the asymptotic rate can be maximized by using Lemma 3. Finally we obtain
\begin{multline}\label{24}
\alpha^{\mathrm{RZF},\mathrm{opt}}=\\\frac{(PKe_2^2+\frac{\sigma_2^2P}{Q})\mathcal{E}_3^\theta+(e_2^2\sigma_1^2KM+\frac{(e_1^2P+\sigma_1^2)M\sigma_2^2}{Q})\mathcal{E}_2^\theta}
{(e_1^2P+\sigma_1^2)K\mathcal{E}_2^\theta+\frac{PK(M-1)(M+2)}{M(M+1)^2}\mathcal{E}_3^\theta-\frac{PK(M-1)}{M(M+1)^2}\mathcal{E}_4^\theta}.
\end{multline}

\subsection{Remarks}
From (\ref{24}) we can observe that $\alpha^{\mathrm{RZF},\mathrm{opt}}$ is independent of the instantaneous CSIs. This is very practical because once we have  $\alpha^{\mathrm{MMSE},\mathrm{opt}}$ in terms of SNRs of BC and FC which we call as PNR ($=P/\sigma_1^2$) and QNR ($=Q/\sigma_2^2$), $e_1^2$ and $e_2^2$, $\alpha^{\mathrm{RZF},\mathrm{opt}}$ can be easily calculated although the solution is not in closed-form. The derived $\alpha^{\mathrm{MMSE},\mathrm{opt}}$ in (\ref{29}) and $\alpha^{\mathrm{RZF},\mathrm{opt}}$ in (\ref{24}) are respectively  monotonically increasing functions of $e_1$ and $e_2$. So the robust MMSE-RZF balances the desired signal and the additional noise inherited from CSI error through larger regularizing factors. On the other hand, the proposed scheme is of low complexity because it needs only one QRD at the destination while in the work of \cite{9} $K$ or $2K$ QRD is needed at the relay nodes.

From the asymptotic rate in (\ref{1}), we see that they satisfies the
scaling law in~\cite{4}, i.e., $C=(M/2) \log(K)+O(1)$ for large $K$. Therefore, the proposed scheme achieves the intranode array gain $M$ and the distributed array gain $K$.
Intranode array gain is the gain obtained from the introduction of
multiple antennas in each node of the dual-hop networks. Distributed
array gain results from the implementation of multiple relay nodes and
needs no cooperation among them.
Note that although the MMSE-RZF with QR SIC is optimized for large $K$, it also has efficient performance for small $K$ which is validated by the simulations.
Also from (\ref{1}), it is observed that when
QNR grows to infinity for a fixed PNR, the rate will
reach a limit. When PNR and QNR both grow to infinity, the capacity will grow linearly with PNR and QNR (dB) for perfect
 CSIs, or reach a limit for imperfect
 CSIs. The limit of the rate performance is always referred as the
``ceiling effect''~\cite{27} and will be confirmed by simulations.

Consider the case where CSI error varies with the
number of relays ($K$). Generally, the CSI errors of BC are caused by the estimation error. The CSI errors of FC are caused by the estimation error, the quantization error and feedback delay. As in~\cite{21},~\cite{27}, and \cite{14}, we assume that
\begin{eqnarray}
e_1^2&=&\sigma_e^2=\frac{1}{1+\frac{\rho_\tau}{M}T_\tau},\label{17}\\
e_2^2&=&\sigma_e^2+\sigma_q^2+\sigma_d^2=\frac{1}{1+\frac{\rho_\tau}{M}T_\tau}+2^{-B/M}\nonumber\\
&&+\left(1-\mathcal{J}_0\left(\frac{K+1}{2}\cdot2\pi f_D \tau\right)\right),\label{18}
\end{eqnarray}
where $\sigma_e^2$
denotes the estimation error during the training phase for TDD mode, $\sigma_q^2$
denotes the quantization error due to limited bits of feedback, and
$\sigma_d^2$ denotes the error weight caused by feedback delay in each
relay. The term $\frac{K+1}{2}$ scales the average delay for each relay when there is $K$ relay nodes. $\rho_\tau$ is the SNR of pilot signals in the training phase, $T_\tau$ is the duration of training phase, $B$ is the number of feedback bits for each relay, $\mathcal{J}_0$ is a Bessel function of the first kind of order 0, $f_D$ is the maximum Doppler shift, and $\tau$ is the feedback delay. Note that in (\ref{17}) and (\ref{18}), the calculations are approximations for analysis and numerical simulations. Substituting such $e_1^2$ and $e_2^2$ into the asymptotic capacities and divides the numerator
and the denominator by $K^2$, we see that the denominator will first decrease when $e_2^2$ is small and then increase when $e_2^2$ becomes large, which implies that the denominator will reach a
minimum value at some $K$. Therefore, there exists an optimal number
of relays to maximize the asymptotic capacities, which will be
confirmed by simulations.

\section{Simulation Results}
In this section, numerical results are carried out to validate what we draw from the analysis in the previous sections for the proposed beamforming design. We compare the robust MMSE-RZF with MF and MF-RZF in~\cite{10} and QR in~\cite{8}. The $\alpha^{\mathrm{RZF}}$ of MF-RZF in~\cite{10} is fixed to $1$. ZF mentioned in Section
\uppercase \expandafter {\romannumeral 3} is also plotted for reference. In all these figures, we set $M=N=4$, and focus on the performance of rates for various number of relays, SNR of BC and FC, and power of CSI errors. The ergodic rates are plotted by simulations through 1000 different channel realizations.
\subsection{Capacity Versus Number of Relays}
In Fig.~2, we compare the ergodic and asymptotic rate capacities
of the MMSE-RZF beamforming schemes for various regularizing
factors. We set PNR=QNR=$10dB$ and $e_1^2=e_2^2=0.01$. The curves are
the asymptotic rates, and the dots are the ergodic rates
obtained from simulation. The ergodic rate converges to the derived asymptotic rate for various $\alpha^{\mathrm{MMSE}}$ and $\alpha^{\mathrm{RZF}}$.
In Fig.~3, we compare the rate performance of the five
beamforming schemes.  The advantage of the proposed robust MMSE-RZF
can be observed. Note that MF, MF-RZF and ZF are all special cases
of MMSE-RZF, which are not optimized with the system condition. The bad
performance of QR can be explained as that its effective channel
matrix is not diagonal but upper triangular, which results in power
consumption. The  poor performance of ZF
 comes from the inverse Wishart distribution term in
its power control factor at the relays, especially when $M=N$.  We
find that the ergodic capacities still satisfy the scaling law
in~\cite{4}, i.e., $C=(M/2) \log(K)+O(1)$ for large $K$ in the
presence of CSI errors. This is also consistent with the  asymptotic
capacities derived for MMSE-RZF with QR SIC detection.
\begin{figure}
\centering
\includegraphics[width=3in]{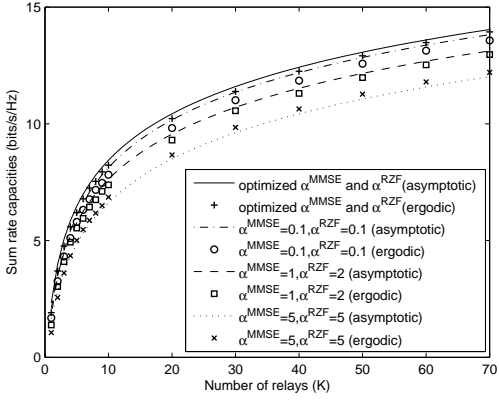}
\caption{Ergodic rates and asymptotic rates versus $K$ for various $\alpha^{\mathrm{MMSE}}$ and $\alpha^{\mathrm{RZF}}$.}
\end{figure}
\begin{figure}
\centering
\includegraphics[width=3in]{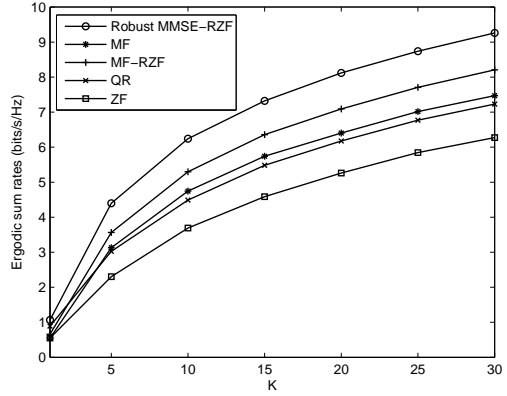}
\caption{Ergodic rates versus $K$. PNR=10dB, QNR=10dB. $e_1^2=e_2^2=-10dB$.}
\end{figure}

\subsection{Capacity Versus Power of CSI Error}
Fig.~4 compare the ergodic rate capacities versus the power of
 CSI error. we set $K=3$ and
PNR$=10dB$, QNR$=10dB$. In this case, MMSE-RZF also outperforms others as the power of CSI error increases. The superiority of the MMSE-RZF compared with MF and MF-RZF decreases as $e_1$ and $e_2$ increase, and ZF outperforms MF for small $e_1 (e_2)$ while underperforms MF for big $e_1 (e_2)$. This is because that when $e_1$ and $e_2$ are small, the optimal $\alpha^{\mathrm{MMSE}}$ and $\alpha^{\mathrm{RZF}}$ should be small, e.g., $\alpha^{\mathrm{MMSE,opt}}=0.5$ for PNR=$10dB$ and $e_1=0$. While in the MF and MF-RZF, $\alpha^{\mathrm{MMSE}}=\infty$ and $\alpha^{\mathrm{RZF}}=1$, which are far away from the optimal values. When $e_1$ and $e_2$ grows,  $\alpha^{\mathrm{MMSE,opt}}$ and $\alpha^{\mathrm{RZF,opt}}$ increases, which get close to the MF and MF-RZF.
\begin{figure}
\centering
\includegraphics[width=3in]{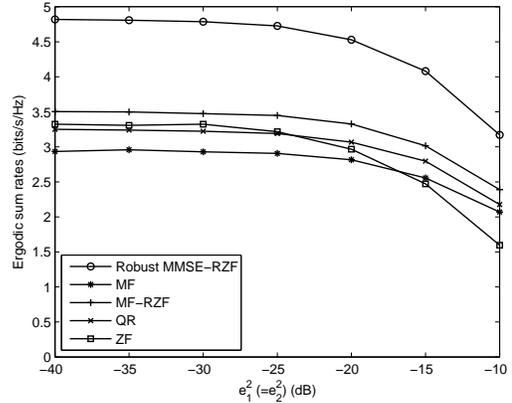}
\caption{Ergodic rates versus $e_1^2 (=e_2^2)$. PNR=10dB, QNR=10dB.}
\end{figure}

\subsection{Capacity Versus PNR and QNR}
In Fig.~5, we increase PNR and QNR
simultaneously for $K=5$. When  CSIs are perfect, the rates of all the five beamformings grow linearly
with the PNR (=QNR) in dB. When CSI
error occurs, we see the capacity limits. This is the ``ceiling effect" discussed in Section
\uppercase \expandafter {\romannumeral 5}. This can also be seen in (\ref{1}). If CSIs are perfect, SNR of each stream grows linearly with PNR (=QNR), so the rate grows linearly with PNR (=QNR) in dB. When CSI is imperfect, the numerator and denominator will simultaneously grows, resulting in a limit of SNR and the rate. The rate of ZF beamformer converges to the proposed robust MMSE-RZF beamformer at high SNR (PNR, QNR) for perfect CSI. This can be explained by the derived $\alpha^{\mathrm{MMSE,opt}}$ in (\ref{29}) and $\alpha^{\mathrm{RZF,opt}}$ in (\ref{24}), which both converge to zero at high SNR (large $P/\sigma_1^2$ and $Q/\sigma_2^2$) and perfect CSI ($e_1=e_2=0$). Since ZF is known to be the optimal beamforming for high SNR, the proposed MMSE-RZF is asymptotically optimal at high SNR with the QR SIC at the destination.
\begin{figure}
\centering
\includegraphics[width=3in]{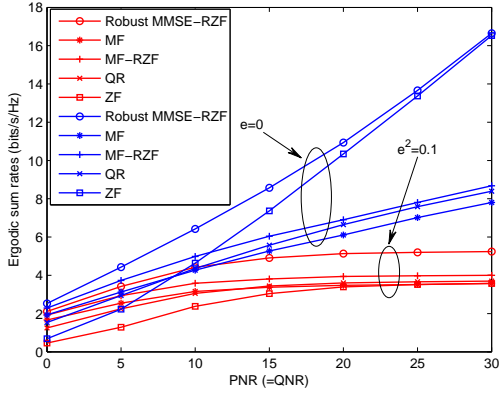}
\caption{Ergodic rates versus PNR (=QNR). $e_1^2=e_2^2=0$ or $e_1^2=e_2^2=-10dB$.}
\end{figure}

\subsection{Capacity Versus Relay Number for Dynamic CSI Error}

It is interesting to see Fig.~6 which shows the rates versus
the  relay number $K$ when CSI errors varies with $K$. This is a
very practical scenario when we assume that the error caused by
channel estimation is $\sigma_e^2=0.05$, $B=24$ for feedback,
$f_D=10Hz$ for $f=2.4GHz$ and $v=4.5Km/h$ for pedestrian speed
($f_D=vf/C$, $C$ denotes the speed of light), and $\tau=5ms$ which
is available in practical transmission. The feedback is based on the Lloyd VQ algorithm as in \cite{27}. It is observed that the
rates achieve maximum at some optimal relay number in the
presence of CSI error. For the assumption in this figure, the
optimal $K$ is $4$. For multi-relay networks, the processing and
feedback delay is always large. So we can save power and improve
performance by only choosing the optimal number of relays to forward the signal. The
selected relays can be constant or based on the instantaneous CSI.
\begin{figure}
\centering
\includegraphics[width=3in]{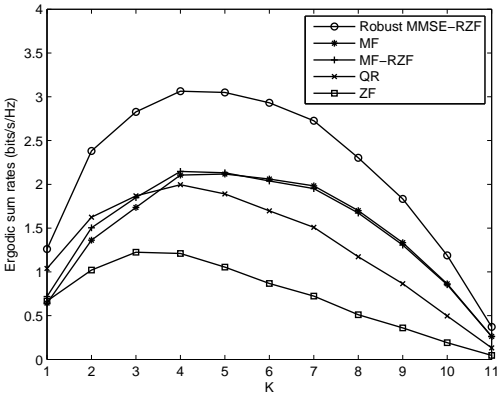}
\caption{Ergodic rates versus $K$. PNR=QNR=$10dB$. $e_1$ and $e_2^2$ are changed with number of relays $K$ as defined in (\ref{17}) and (\ref{18}).}
\end{figure}

\subsection{Capacity Versus Number of Feedback Bits}
In Fig.~7, using the same model as in Fig.~6, we focus on the effect of limited feedback. As can be observed, the rates increases fast with the number of feedback bits at the beginning, but slowly when the bits are enough to restore the CSIs. In Fig.~8, we further compare the performance under different number of feedback bits with perfect CSI case. It is observed that as the number of feedback bits increases, the rate approaches that of the perfect CSI case, and the rate is very close to that of perfect CSI case when the feedback bits are $12$.
\begin{figure}
\centering
\includegraphics[width=3in]{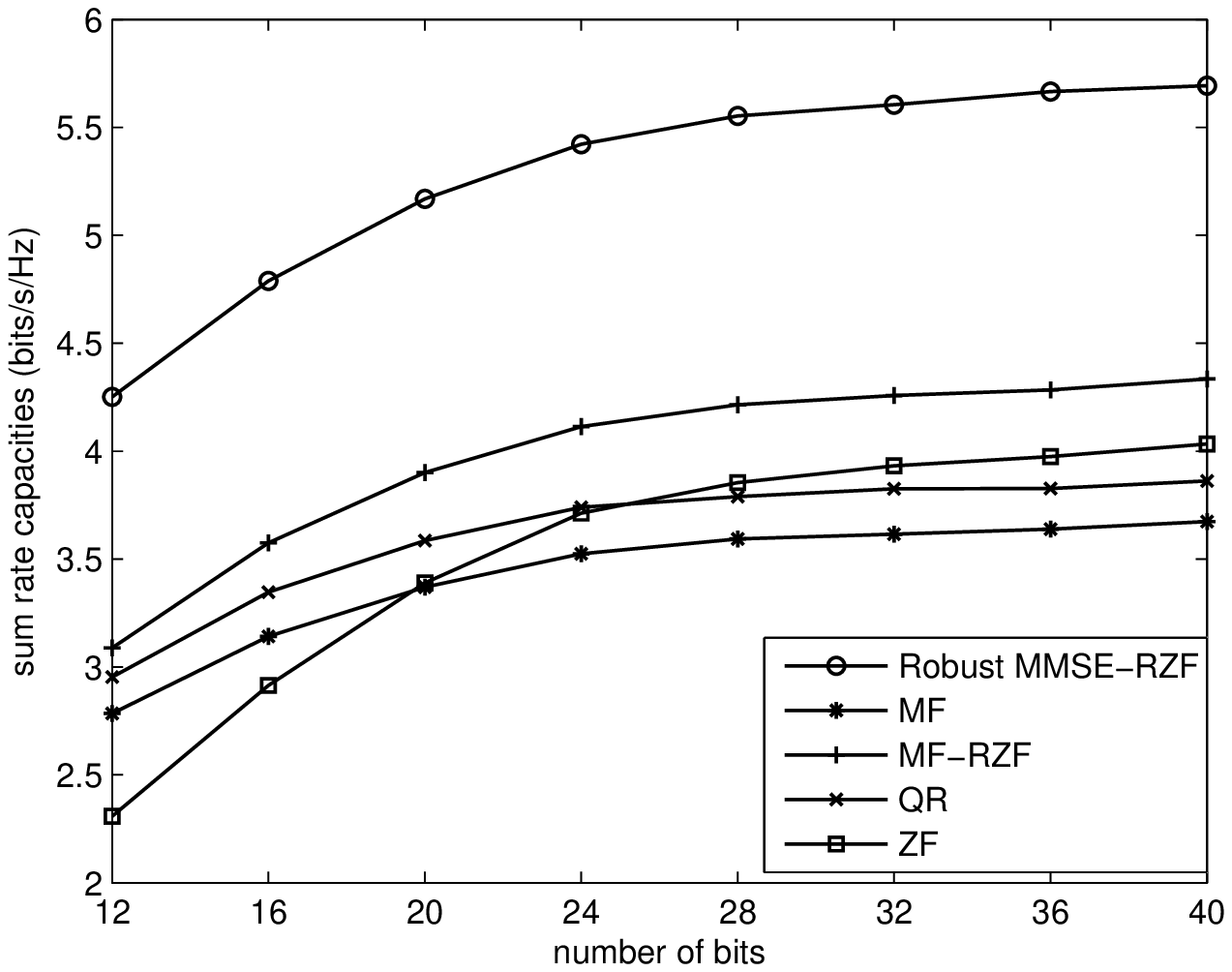}
\caption{Ergodic rates versus number of feedback bits. PNR=QNR=$10dB$. $e_1=0$ and $e_2^2$ is defined as in (\ref{18}).}
\end{figure}

\begin{figure}
\centering
\includegraphics[width=3in]{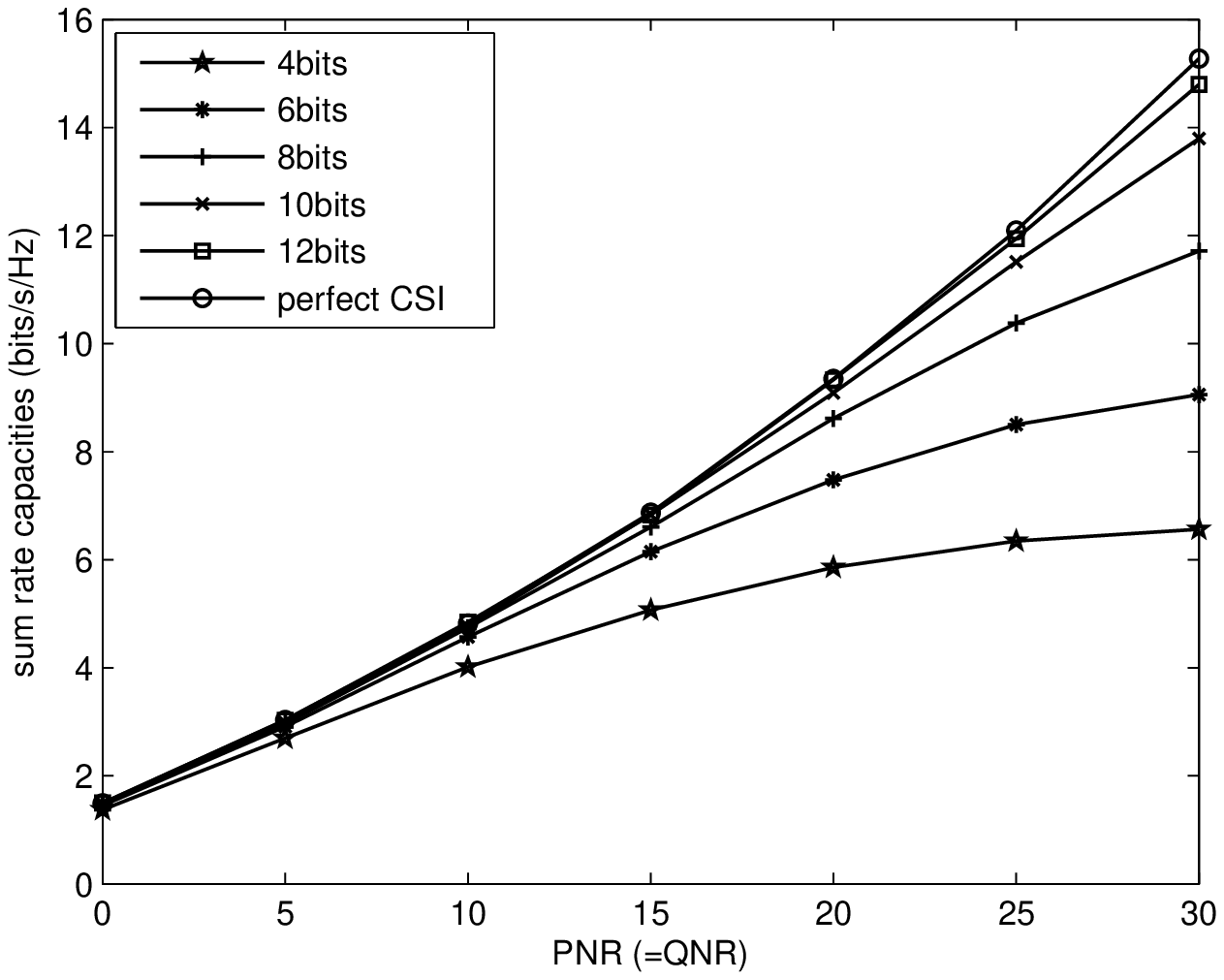}
\caption{Ergodic rates versus number of feedback bits. PNR=QNR=$10dB$. $e_1=0$ and $e_2^2$ is defined as in (\ref{18}).}
\end{figure}


\section{Conclusion}
In this paper, based on linear beamformer at the relay and QR SIC at
the destination,  we propose a robust MMSE-RZF beamformer optimized in terms of rate
 in a dual-hop MIMO multi-relay network with
Amplify-and-Forward (AF) relaying protocol in the presence of imperfect CSI. Since it is difficult to obtain the global optimal MMSE-RZF beamformer, we solve the optimization in two steps. The MMSE receiver is optimized by maximizing the SINR at relay nodes. The RZF precoder is optimized by maximizing the asymptotic rate derived upon a given MMSE receiver using Law of Large Number. Simulation results show that
the asymptotic rate matches with the ergodic rate.
Analysis and simulations demonstrate that the proposed robust
MMSE-RZF outperforms other coexistent beamforming schemes.

\appendix{}
We calculate square of the norm of the non-diagonal as (\ref{50}).
\begin{figure*}[!t]
\begin{equation}
\begin{split}
\left|({\bf{H}}_{{\mathcal{S}\mathcal {D}}})_{(i,j)}\right|^2
=&\left|\left(\sum_{k=1}^K\mathbf{Q}_k\frac{\mathbf{\Lambda}_k}{\mathbf{\Lambda}_k+\alpha^{\mathrm{RZF}}\mathbf{I}_M}\mathbf{Q}_k^H\mathbf{P}_k\frac{\mathbf{\Theta}_k}{\mathbf{\Theta}_k+\alpha^{\mathrm{MMSE}}\mathbf{I}_M}\mathbf{P}_k^H\right)_{(i,j)}\right|^2
\\=&\sum_k\left|\sum_{\ell,m,n}(\mathbf{Q}_k)_{i,\ell}\frac{\lambda_{k,\ell}}{\lambda_{k,\ell}+\alpha^{\mathrm{MMSE}}}(\mathbf{Q}_k)_{m,\ell}^*(\mathbf{P}_k)_{m,n}\frac{\theta_{k,n}}{\theta_{k,n}+\alpha^{\mathrm{RZF}}}(\mathbf{P}_k)_{j,n}^*\right|^2
\\=&\sum_{k,l,n,r,t}\sum_{m\neq i,j}\frac{\lambda_{k,\ell}}{\lambda_{k,\ell}+\alpha^{\mathrm{MMSE}}}\frac{\lambda_{k,r}}{\lambda_{k,r}+\alpha^{\mathrm{MMSE}}}\frac{\theta_{k,n}}{\theta_{k,n}+\alpha^{\mathrm{RZF}}}\frac{\theta_{k,t}}{\theta_{k,t}+\alpha^{\mathrm{RZF}}}
\\&(\mathbf{Q}_k)_{i,\ell}(\mathbf{Q}_k)_{m,\ell}^{*}(\mathbf{Q}_k)_{i,r}^{*}(\mathbf{Q}_k)_{m,r}
(\mathbf{P}_k)_{m,n}(\mathbf{P}_k)_{j,n}^{*}(\mathbf{P}_k)_{m,t}^{*}(\mathbf{P}_k)_{j,t}
\\+&\sum_{k,\ell,n,r,t}\frac{\lambda_{k,\ell}}{\lambda_{k,\ell}+\alpha^{\mathrm{MMSE}}}\frac{\lambda_{k,r}}{\lambda_{k,r}+\alpha^{\mathrm{MMSE}}}\frac{\theta_{k,n}}{\theta_{k,n}+\alpha^{\mathrm{RZF}}}\frac{\theta_{k,t}}{\theta_{k,t}+\alpha^{\mathrm{RZF}}}
\\&|(\mathbf{Q}_k)_{i,\ell}|^2|(\mathbf{Q}_k)_{i,r}|^2
(\mathbf{P}_k)_{i,n}(\mathbf{P}_k)_{j,n}^{*}(\mathbf{P}_k)_{i,t}^{*}(\mathbf{P}_k)_{j,t}
\\+&\sum_{k,\ell,n,r,t}\frac{\lambda_{k,\ell}}{\lambda_{k,\ell}+\alpha^{\mathrm{MMSE}}}\frac{\lambda_{k,r}}{\lambda_{k,r}+\alpha^{\mathrm{MMSE}}}\frac{\theta_{k,n}}{\theta_{k,n}+\alpha^{\mathrm{RZF}}}\frac{\theta_{k,t}}{\theta_{k,t}+\alpha^{\mathrm{RZF}}}
\\&(\mathbf{Q}_k)_{i,\ell}(\mathbf{Q}_k)_{j,\ell}^{*}(\mathbf{Q}_k)_{i,r}^{*}(\mathbf{Q}_k)_{j,r}
|(\mathbf{P}_k)_{j,n}|^2|(\mathbf{P}_k)_{j,t}|^2\label{50}
\end{split}
\end{equation}
\hrulefill
\begin{equation}
\begin{split}
\left|({\bf{H}}_{{\mathcal{S}\mathcal {D}}})_{(i,j)}\right|^2\overset{w.p.}{\longrightarrow}&\frac{K(M+2)}{(M+1)^2}
\mathrm{E}\left\{\frac{\lambda^2}{\left(\lambda+\alpha^{\mathrm{RZF}}\right)^2}\right\}
\mathrm{E}\left\{\frac{\theta^2}{\left(\theta+\alpha^{\mathrm{MMSE}}\right)^2}\right\}
\\-&\frac{K}{(M+1)^2}
\mathrm{E}\left\{\frac{\lambda\lambda^{\prime}}{\left(\lambda+\alpha^{\mathrm{RZF}}\right)\left(\lambda^{\prime}+\alpha^{\mathrm{RZF}}\right)}\right\}
\mathrm{E}\left\{\frac{\theta^2}{\left(\theta+\alpha^{\mathrm{MMSE}}\right)^2}\right\}
\\-&\frac{K}{(M+1)^2}
\mathrm{E}\left\{\frac{\lambda^2}{\left(\lambda+\alpha^{\mathrm{RZF}}\right)^2}\right\}
\mathrm{E}\left\{\frac{\theta\theta^{\prime}}{\left(\theta+\alpha^{\mathrm{MMSE}}\right)\left(\theta^{\prime}+\alpha^{\mathrm{MMSE}}\right)}\right\}
\\-&\frac{KM}{(M+1)^2}
\mathrm{E}\left\{\frac{\lambda\lambda^{\prime}}{\left(\lambda+\alpha^{\mathrm{RZF}}\right)\left(\lambda^{\prime}+\alpha^{\mathrm{RZF}}\right)}\right\}
\mathrm{E}\left\{\frac{\theta\theta^{\prime}}{\left(\theta+\alpha^{\mathrm{MMSE}}\right)\left(\theta^{\prime}+\alpha^{\mathrm{MMSE}}\right)}\right\}
\end{split}\label{51}
\end{equation}
\hrulefill
\end{figure*}
Since the fact in~\cite{RZF} that
\begin{eqnarray}
\mathrm{E}\left\{|(\mathbf{Q}_r)_{i,k}|^2|(\mathbf{Q}_r)_{\ell,k}|^2\right\}&=&\frac{2}{M(M+1)} \, \mathrm{if} \, i=\ell, \label{31}\\  \mathrm{E}\left\{|(\mathbf{Q}_r)_{i,k}|^2|(\mathbf{Q}_r)_{\ell,k}|^2\right\}&=&\frac{1}{M(M+1)}\, \mathrm{if} \, i\neq \ell,\label{32}
\end{eqnarray}
then when $i\neq m, \ell\neq r$ we have
\begin{equation}
\begin{split}
&\mathrm{E}\left\{(\mathbf{Q}_k)_{i,\ell}(\mathbf{Q}_k)_{m,\ell}^{*}(\mathbf{Q}_k)_{i,r}^{*}
(\mathbf{Q}_k)_{m,r}\right\}
\\=&\frac{1}{M(M-1)}\mathrm{E}\left\{\sum_{\ell,r=1,\ell\neq r}^M(\mathbf{Q}_k)_{i,\ell}(\mathbf{Q}_k)_{m,\ell}^{*}(\mathbf{Q}_k)_{i,r}^{*}
(\mathbf{Q}_k)_{m,r}\right\}
\\=&\frac{1}{M(M-1)}\mathrm{E}\left\{\sum_{\ell,r=1}^M(\mathbf{Q}_k)_{i,\ell}(\mathbf{Q}_k)_{m,\ell}^{*}(\mathbf{Q}_k)_{i,r}^{*}
(\mathbf{Q}_k)_{m,r}\right\}\\
&-\frac{1}{M(M-1)}\mathrm{E}\left\{\sum_{\ell=1}^M|(\mathbf{Q}_k)_{i,\ell}|^2|(\mathbf{Q}_k)_{m,\ell}|^2\right\}
\\=&\frac{\mathrm{E}\left\{\left(\sum_{\ell=1}^M(\mathbf{Q}_k)_{i,\ell}(\mathbf{Q}_k)_{m,\ell}^{*}\right)\left(\sum_{r=1}^M(\mathbf{Q}_k)_{i,r}(\mathbf{Q}_k)_{m,r}^{*}\right)\right\}}{M(M-1)}\\
&-\frac{1}{M(M-1)}\cdot\frac{1}{(M+1)}
\\=&-\frac{1}{(M-1)M(M+1)}.\label{33}
\end{split}
\end{equation}
Substituting (\ref{31}), (\ref{32}) and (\ref{33}) into (\ref{50}), and through some manipulation,  we have (\ref{51}),
where $\lambda$ and $\lambda^{\prime}$, $\theta$ and $\theta^{\prime}$ are different singular values within one decomposition. Obviously, (\ref{51}) equals zero when $\alpha^{\mathrm{MMSE}}$ and $\alpha^{\mathrm{RZF}}$ are zero.

\end{document}